
\documentclass[lettersize,journal]{IEEEtran}
\usepackage{amsmath,amsfonts}
\usepackage{algorithmic}
\usepackage{algorithm}
\usepackage{array}
\usepackage{textcomp}
\usepackage{stfloats}
\usepackage{url}
\usepackage{verbatim}
\usepackage{graphicx}
\usepackage{cite}
\hyphenation{op-tical net-works semi-conduc-tor IEEE-Xplore}

\usepackage{amsmath,amsthm}   
\usepackage{amssymb}
\usepackage{nomencl}
\usepackage{multirow}

\usepackage{bm}
\usepackage{lipsum}
\usepackage{makeidx}
\usepackage{enumerate}
\usepackage{color}

\usepackage[font=small,labelfont=bf]{caption}
\usepackage[font=small,labelfont=bf]{subcaption}

\usepackage{xcolor}
\usepackage{tikz,pgfplots}
\usetikzlibrary{calc}
\usetikzlibrary{intersections}
\usetikzlibrary{arrows,shapes}
\usetikzlibrary{positioning}

\newtheorem{corollary}{Corollary}

\newtheorem{lemma}{Lemma}

\theoremstyle{definition}

\allowdisplaybreaks

\begin{document}
\title{Spectral and Energy Efficiency Maximization of MISO STAR-RIS-assisted URLLC Systems
}
\author{Mohammad Soleymani$^*$, 
Ignacio Santamaria$^\dag$ \emph{Senior Member, IEEE}, and
Eduard Jorswieck$^\ddag$ \emph{Fellow, IEEE}
 \\ \thanks{ 
$^*$Mohammad Soleymani is with the Signal and System Theory Group, Universit\"at Paderborn, Germany, http://sst.upb.de  (email: \protect\url{mohammad.soleymani@sst.upb.de}).  

$^\dag$Ignacio Santamaria is with the Department of Communications Engineering, University of Cantabria (email: \protect\url{i.santamaria@unican.es}).

$^\ddag$ Eduard Jorswieck is with the Institute for Communications Technology, Technische Universit\"at Braunschweig, 38106 Braunschweig, Germany
(e-mail: \protect\url{jorswieck@ifn.ing.tu-bs.de})

The work of I. Santamaria has been partly  supported by the project ADELE PID2019-104958RB-C43, funded by MCIN/AEI/10.13039/501100011033.

The work of Eduard Jorswieck was supported in part by the Federal Ministry of Education and Research (BMBF, Germany) as part of the 6G Research and Innovation Cluster 6G-RIC under Grant 16KISK020K.
}}
\maketitle
\begin{abstract}
This paper proposes a general optimization framework to improve the spectral and energy efficiency (EE) of ultra-reliable low-latency communication (URLLC) simultaneous-transfer-and-receive (STAR) reconfigurable intelligent surface (RIS)-assisted interference-limited systems with finite block length (FBL). This framework can solve a large variety of optimization problems in which the objective and/or constraints are linear functions of the rates and/or EE of users. Additionally, the framework can be applied to any interference-limited system with treating interference as noise as the decoding strategy at receivers. We consider a multi-cell broadcast channel as an example and show how this framework can be specialized to solve the minimum-weighted rate, weighted sum rate, global EE and weighted EE of the system. We make realistic assumptions regarding the (STAR-)RIS by considering three different feasibility sets for the components of either regular RIS or STAR-RIS. Our results show that RIS can substantially increase the spectral and EE of URLLC systems if the reflecting coefficients are properly optimized. Moreover, we consider three different transmission strategies for STAR-RIS as energy splitting (ES), mode switching (MS), and time switching (TS).
We show that STAR-RIS can outperform a regular RIS when the regular RIS cannot  cover all the users.    
Furthermore, it is shown that the ES scheme outperforms the MS and TS schemes.
\end{abstract} 
\begin{IEEEkeywords}
 Energy efficiency, finite block length, majorization minimization, MISO broadcast channels, reflecting intelligent surface, spectral efficiency, ultra-reliable low-latency communications.
\end{IEEEkeywords}

\section{Introduction}
The sixth generation (6G) of wireless systems should be able to support many different applications such as ultra-reliable low-latency communication (URLLC), massive machine-type communication (mMTC), enhanced mobile broadband (eMBB) and internet of things (IoT) \cite{vaezi2022cellular, chafii2022ten, almekhlafi2022enabling, durisi2016toward, popovski2019wireless}. 
These applications typically require very low decoding error probabilities as well as very low latency and enforce us to employ a packet size much shorter than human type communications \cite{bockelmann2016massive}. Additionally, there may exist tens of billions of  various machine-type
terminals such as sensors, vehicles, drones, and robots in mMTC and/or IoT networks \cite{vaezi2022cellular, bockelmann2016massive}.  To fulfill these demands, spectral and energy efficiency (EE) should be drastically improved \cite{vaezi2022cellular}. 
Recently, it has been shown that reconfigurable intelligent surface (RIS) can be a promising technology for enhancing the performance of various wireless networks \cite{wu2021intelligent, huang2020holographic, di2020smart}.  

In this work, we investigate the performance of RIS (either regular or simultaneous transmit and reflect (STAR)) in URLLC systems with finite block length (FBL) and propose a unified optimization framework to improve the spectral and EE of (STAR-)RIS-assisted URLLC systems. 
\subsection{Literature Review}
Modern mobile communication systems are expected to accomplish many wide applications and services in various areas such as  autonomous driving, healthcare, augmented reality, automation in industry, and so on \cite{simsek20165g, aceto2019survey,9815179}. 
As indicated, these application may require FBL in which the Shannon rate may not be  
achievable \cite{polyanskiy2010channel}. 
In this case, the achievable rate for a point-to-point system with Gaussian independent and identically distributed (iid) signals can be approximated as
\cite{polyanskiy2010channel}
\begin{equation}\label{na}
{r}=
C
-
\alpha\sqrt{{V}},
\end{equation}
where $C$ is the Shannon Rate, $\alpha$ is a constant value, which is a function of the packet length and the desired decoding error probability, and $V$ is the channel dispersion. 
The rate approximation in \eqref{na} is widely known as the normal approximation (NA) \cite{polyanskiy2010channel}. The NA is known to be accurate for packet lengths more than around $124$ bits and decoding error probabilities higher than $10^{-5}$ \cite{erseghe2016coding, erseghe2015evaluation, cocskun2019efficient, lancho2019single }. 
Note that in the third generation partnership project (3 GPP),
the packet length is $32$ bytes (or equivalently $256$ bits), and the error probability is $\epsilon=10^{-5}$  \cite[Sec. II.D]{popovski2019wireless}.
In general,  
 the reliability constraint and/or packet length may vary for different services \cite{popovski20185g}. For instance, according to \cite{popovski20185g}, the packet error probability should be in order of $10^{-5}$ for URLLC, $10^{-3}$ for eMBB, and $10^{-1}$ for mMTC. It appears that the NA should be accurate enough for these parameters if we operate in  moderate or high SNR regimes \cite{erseghe2016coding, erseghe2015evaluation}.

To be able to support URLLC, one of the targets of 6G is to improve the spectral and EE \cite{vaezi2022cellular}. Note that the EE of a system is defined as the ratio between the system throughput and the total power consumption \cite{zappone2015energy}. Energy efficient techniques are very important for IoT and/or industrial IoT (IIoT) and/or MTC since such techniques can increase the battery life of devices and reduce the implementation and/or maintenance expenses \cite{vaezi2022cellular, aceto2019survey}. 
The spectral efficiency and/or EE of different systems with FBL has been studied in \cite{schiessl2020noma,  schiessl2019delay, schiessl2018delay, ghanem2020resource, nasir2020resource, nasir2021cell}. 
The authors in \cite{schiessl2020noma} considered the performance of non-orthogonal multiple-access (NOMA) in a multiple-access channel (MAC) with FBL. The paper \cite{schiessl2019delay} studied the delay performance of multiple-input, single-output (MISO) broadcast channel (BC) with imperfect channel state information (CSI) and FBL. The authors in \cite{ghanem2020resource} maximized the sum-rate of a single-cell MISO BC with orthogonal-frequency-division-multiple-access (OFDMA)-URLLC. 
The paper\cite{nasir2020resource} maximized the minimum rate of a MISO URLLC BC. The authors in \cite{nasir2021cell} studied a massive multiple-input, multiple-output (MIMO) URLLC with FBL and proposed schemes to maximize the minimum rate and EE of the network.

To improve spectral and EE, 6G will employ some emerging technologies such as RIS, which can enhance the coverage and consequently, the system performance \cite{wu2021intelligent, di2020smart, huang2020holographic, elmossallamy2020reconfigurable}. RIS has been employed to improve the performance of various systems with realistic assumptions regarding the CSI and devices \cite{wu2019intelligent, yang2020risofdm, huang2019reconfigurable,  kammoun2020asymptotic, yu2020joint, soleymani2023rate,  pan2020multicell, zhang2020intelligent, santamaria2023interference, ni2021resource, soleymani2022improper, soleymani2022noma, soleymani2022rate}.  
For instance, in \cite{pan2020multicell, soleymani2022improper}, a multi-cell MIMO RIS-assisted BC was considered and it was shown that RIS can improve the spectral and EE of the system. 
The superiority of RIS in a single-cell MISO BC was shown in \cite{huang2019reconfigurable,kammoun2020asymptotic, yu2020joint}. 
In \cite{ni2021resource, soleymani2022noma},  NOMA was applied to multi-cell RIS-assisted BCs and it was shown that NOMA can enhance the system performance. 
The authors in \cite{yang2020risofdm} showed that RIS can improve the performance of an OFDM system. We refer the reader to \cite{wu2021intelligent, di2020smart} for an overview of RIS.

A regular RIS can only reflect signals, which may restrict the coverage area of the RIS. For a $360^\circ$ coverage, a STAR-RIS has been proposed
in which each passive element can reflect and transmit at the same time \cite{mu2021simultaneously, wu2021coverage, liu2021star, xu2021star,9774942,ni2022star, xie2022star, zhang2022secrecy, nguyen2022performance}. 
STAR-RIS is a novel technology and has been evaluated by experimental results \cite{docomo2020docomo}.
Due to a wider coverage area by STAR-RIS, it can be expected that STAR-RIS is able to support more applications especially when it is not possible to locate a regular RIS such that all transceivers are in its reflection area. 

All the aforementioned papers on RIS, \cite{wu2021intelligent, di2020smart, huang2020holographic, elmossallamy2020reconfigurable, huang2019reconfigurable, wu2019intelligent, kammoun2020asymptotic, yu2020joint,  yang2020risofdm, pan2020multicell, zhang2020intelligent,  ni2021resource, soleymani2022improper, soleymani2022noma, soleymani2022rate}, considered the performance of RIS in systems with infinite block length. The performance of  RIS in the presence of FBL has been studied in
\cite{  
li2021aerial,  vu2022intelligent, ren2021intelligent, 
xie2021user,  zhang2021irs,   almekhlafi2021joint, 
ranjha2020urllc, ranjha2021urllc, 
 dhok2021non, ghanem2021joint, abughalwa2022finite}. 
The paper \cite{li2021aerial} studied the performance of RIS and unmanned aerial vehicles (UAVs) in URLLC systems and showed that RIS can be beneficial in the system. 
The authors in \cite{vu2022intelligent} considered NOMA in RIS-assisted single-input, single-output (SISO) BC with two users and showed that RIS can improve the system throughput and decrease the average decoding error rate. 
 The performance of RIS in MISO point-to-point URLLC systems with a factory automation scenario was investigated in \cite{ren2021intelligent}, where  the average data rate and decoding error probability were obtained under different assumptions regarding the fading and propagation of the channels. 
The authors in \cite{xie2021user} minimized the latency of a single-cell SISO RIS-assisted URLLC BC with user grouping.
The paper \cite{zhang2021irs} considered RIS-assisted URLLC systems with FBL to transfer information and energy wirelessly. 
The paper \cite{almekhlafi2021joint} proposed resource allocation schemes for RIS-assisted single-cell BCs with the coexistence of eMBB and URLLC services.

\begin{table*}
\centering
\scriptsize
\caption{A brief comparison of the most related works.}\label{table-1}
\begin{tabular}{|c|c|c|c|c|c|c|c|c|c|c|c|c|c|c|}
	\hline
&RIS&STAR-RIS&URLLC&Multi-user communications&Multiple-antenna Systems&Channel dispersion in \cite{scarlett2016dispersion}& EE metrics
 \\
\hline
  This paper&$\surd$&$\surd$&$\surd$&$\surd$&$\surd$&$\surd$&$\surd$
\\
\hline
\cite{ nasir2020resource,  ghanem2020resource}&&&$\surd$&$\surd$&$\surd$&&
\\
\hline
\cite{schiessl2020noma}&&&$\surd$&$\surd$&&$\surd$&
\\
\hline
\cite{schiessl2019delay}&&&$\surd$&$\surd$&$\surd$&$\surd$&
\\
\hline
\cite{nasir2021cell}&&&$\surd$&$\surd$&$\surd$&$\surd$&$\surd$
\\
\hline
\cite{ kammoun2020asymptotic, yu2020joint,   pan2020multicell, zhang2020intelligent}&$\surd$&&&$\surd$&$\surd$&&
\\
\hline
\cite{huang2019reconfigurable, soleymani2022improper, soleymani2022noma}&$\surd$&&&$\surd$&$\surd$&&$\surd$
\\
\hline
\cite{soleymani2022rate}&$\surd$&$\surd$&&$\surd$&$\surd$&&$\surd$
\\
\hline
\cite{ li2021aerial, ghanem2021joint }&$\surd$&&$\surd$&$\surd$&$\surd$&&
\\
\hline
\cite{ vu2022intelligent, xie2021user,     almekhlafi2021joint}&$\surd$&&$\surd$&$\surd$&&&
\\
\hline
\cite{ ren2021intelligent,   zhang2021irs}&$\surd$&&$\surd$&&$\surd$&&
\\
\hline
\cite{abughalwa2022finite}&$\surd$&&$\surd$&$\surd$&&$\surd$&
\\
\hline
		\end{tabular}
\normalsize
\end{table*}

It is worth emphasizing that the optimal channel dispersion in \eqref{na} is 
\begin{equation}
V^{opt}=1-\frac{1}{\left(1+\gamma\right)^2},
\end{equation}
where $\gamma$ is the SNR. 
Even if we treat interference as noise,  we cannot simply replace the SNR term by signal-to-interference-plus-noise ratio (SINR) and use the same channel dispersion for interference-limited systems \cite{scarlett2016dispersion}. Indeed the optimal channel dispersion cannot be achieved by Gaussian signals in the presence of interference. Unfortunately, this issue is sometimes overlooked in the literature when studying an interference-limited system. In this paper, we consider the channel dispersion in \cite{scarlett2016dispersion}, which can be achieved by Gaussian signals in interference-limited systems with treating interference as noise (TIN). To the best of our knowledge, \cite{abughalwa2022finite} is the only work in interference-limited RIS-assisted systems with FBL that considered the channel dispersion in \cite{scarlett2016dispersion}.
Note that \cite{abughalwa2022finite} studied the geometric mean of the rates. However, in this paper, we consider various utility functions as will be discussed in the next subsection.

\subsection{Motivations and contributions}
The main motivation for this work is to propose a general framework for URLLC RIS-assisted systems. In Table \ref{table-1}, we provide a brief summary of the most related works based on the considered scenario and the channel dispersion.  As can be observed, even though RIS has received many attentions during recent years, the performance of RIS in URLLC with FBL should be further investigated, especially in multi-user communication  systems. 
For instance, to the best of our knowledge, there is no work on STAR-RIS in URLLC with FBL. Additionally, EE metrics have not been considered in RIS-assisted URLLC systems with FBL. 
It should be emphasized that the rate expressions with FBL are more complicated than the Shannon rates, and it is not straightforward to modify and apply existing optimization approaches to URLLC RIS-assisted systems with FBL. Thus, a unified optimization framework is required to study the performance of such systems by considering STAR-RIS and EE metrics.  

In this paper, we propose a general optimization framework for URLLC with FBL in STAR-RIS-assisted systems. To the best of our knowledge, this is the first work to study the performance of STAR-RIS in URLLC systems. Our proposed framework can be applied to any optimization problem in which the objective and/or the constraints are linear functions of the rates and/or EE of users. 
Moreover, the framework can be applied to any interference-limited system, using treating interference as noise (TIN) as the decoding strategy.

We consider a multi-cell MISO RIS-assisted BC as an illustrative example and show how the unified framework can be specialized to solve different optimization problems. To this end, we consider the minimum-weighted-rate, weighted-sum-rate, global EE and minimum-weighted-EE maximization problems. We refer to the objective function of these optimization problems as the utility function.

We make realistic assumptions regarding RIS (either regular or STAR) by modeling the small-scale and large-scale fading and considering three different feasibility sets based on the models in \cite{wu2021intelligent,xu2021star,wu2021coverage,liu2021star,9774942}. 
We propose three main approaches to optimize reflecting/transmitting coefficients in STAR-RIS. First, we assume that a set of RIS components operate only in the reflection mode, and the remaining components operate only in the transmission mode. This scheme is referred to as the mode switching \cite{mu2021simultaneously}. Second, we assume that all the components simultaneously operate in both reflection and transmission mode, which is referred to as the energy splitting mode \cite{mu2021simultaneously}. Third, we assume that each time slot is divided into two sub-slots. Then, all the components operate in the reflection mode in the first sub-slot, while they all operate in the transmission mode in the next sub-slot. This scheme is referred to as time switching \cite{mu2021simultaneously}. 

Through numerical examples, we show that RIS can significantly improve the spectral or the energy efficiency of the system when the reflecting coefficients are properly optimized. Interestingly, it may happen in some special cases that RIS worsen the performance if the reflecting coefficients are chosen randomly. Additionally, we show that STAR-RIS with mode switching and energy splitting approaches outperforms regular RIS when the RIS cannot cover all the users.  The mode-switching scheme  of STAR-RIS slightly improves the system performance over regular RIS. However, the energy-splitting scheme considerably outperforms the regular RIS. 

The main contributions of this work can be summarized as follows:
\begin{itemize} 
\item We propose a unified optimization framework to solve a large family of optimization problems for MISO STAR-RIS-assisted interference-limited URLLC systems. This framework can be applied to any interference-limited system with TIN. 

\item We consider a multicell BC and specialize the framework to solve minimum-weighted rate, weighted-sum rate,  minimum-weighted rate, and global EE maximization problems.

\item We study the performance of STAR-RIS with three different feasibility sets and consider three schemes for optimizing the reflecting/transmitting coefficients of STAR-RIS. We show that STAR-RIS may outperform regular RIS if the regular RIS cannot cover all the users. 

\item We show that RIS can substantially improve the spectral and EE of RIS-assisted  URLLC systems by considering different utility functions and feasibility sets for RIS components. 
\end{itemize}
\subsection{Paper outline}
The rest of the paper is organized as follows. Section \ref{sec-sym} presents the system model and formulate the considered problem. 
Section \ref{sec-iii} presents the generalized optimization framework for the systems without RIS. 
Section \ref{seciv} states the extension of the proposed optimization framework to URLLC (STAR-)RIS-assisted systems. 
Section \ref{num-sec} provides some numerical results. Section \ref{con-sec} concludes the paper. Finally, we provide some proofs in appendices.  

\begin{table}
\centering
\footnotesize
\caption{List of frequently used notations.}\label{table-2}
\begin{tabular}{|l|l|}
	\hline
$L/M$ & No. of BSs/RISs \\
$K$ & No. of users per each cell\\
$N_{BS}/N_{RIS}$
& No. of antennas/components at each BS/RIS\\
u$_{lk}$& $k$-th associated user to  BS $l$\\
${s}_{lk}$
& Transmit  signal of BS $l$ intended for u$_{lk}$\\
$\mathbf{x}_{lk}$
& Beamforming vector of BS $l$ corresponding to ${s}_{lk}$\\
$\mathbf{h}_{lk,i}$
& Equivalent channel between BS $i$ and u$_{lk}$\\
$\mathbf{d}_{jk,l}$
 & Direct channel between the BS $l$ and u$_{jk}$\\  
$\mathbf{f}_{jk,m}$
&Channel between the $m$th RIS and u$_{jk}$\\ 
$\mathbf{G}_{ml}$
&Channel between the BS $l$ and the $m$th RIS\\
$\bm{\Theta}_m$
& Matrix of RIS components for the $m$th RIS \\
$\bm{\Theta}_m^{t/r}$
& Matrix of STAR-RIS components for the TS/RS at  RIS $m$\\
$\epsilon^c$&Decoding error probability\\
$\mathcal{X}$&The feasibility set for beamforming vectors
\\
$\mathcal{T}$&The feasibility set for (STAR-)RIS components
\\
${n}_{lk}$
&Additive white Gaussian noise at u$_{lk}$\\
$e_{lk}/r_{lk}$&Energy efficiency/rate of u$_{lk}$\\
$V_{lk}$&Channel dispersion at u$_{lk}$\\
$p_l$&Power budget of BS $l$\\
$\gamma_{lk}$&SINR at u$_{lk}$\\
$n_t$&Packet length\\
\hline
		\end{tabular}
\normalsize
\end{table} 

\begin{figure}[t!]
    \centering
\includegraphics[width=.45\textwidth]{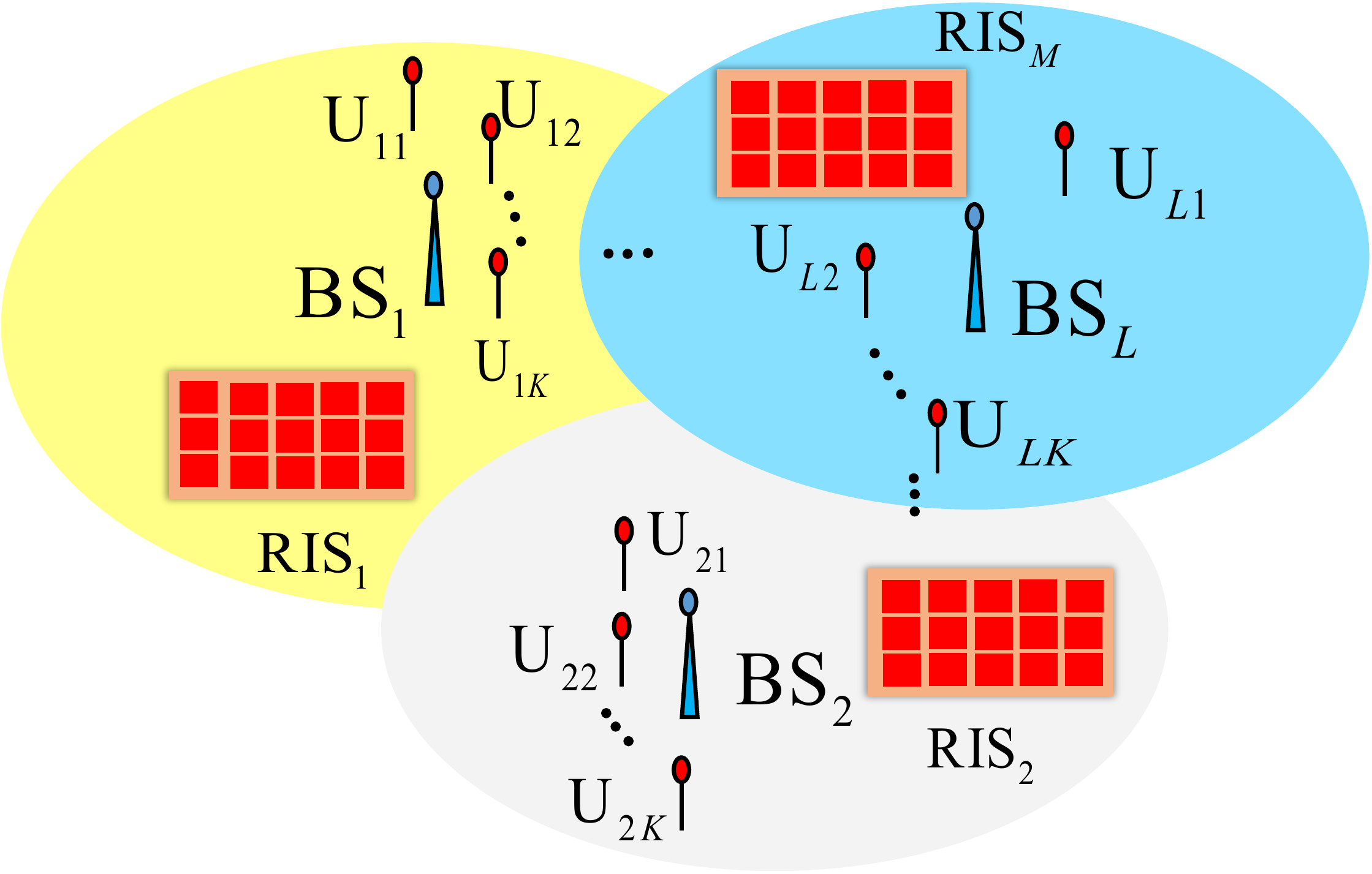}
     \caption{A multicell broadcast channel with RIS.}
	\label{Fig-sys-model}
\end{figure}
\section{System model}\label{sec-sym}
Our unified optimization framework can be applied to a large class of interference-free and/or interference-limited URLLC MISO (STAR)-RIS-assisted systems with TIN. Such systems include, for instance, various multi-user interference channels, cognitive radio systems, broadcast channels, multiple-access channels, device-to-device communications, and so on.
For the sake of illustration, we consider a multicell broadcast channel with $L$ multi-antenna base stations (BSs) in which each BS has $N_{BS}$ antennas and serves $K$ single-antenna users. 
We assume that there are $M\geq L$ RISs with $N_{RIS}$ reflecting elements in the system to assist the BSs. 
Note that a multicell BC is a practical scenario, which is considered in many work such as \cite{xie2022star, soleymani2022rate, ni2021resource, pan2020multicell}. In a multicell BC, intercell interference may highly degrade the system performance especially for cell-edge users, which should be handled by a joint optimization of transmit parameters at BSs. 

We assume perfect, global and instantaneous channel state information (CSI) at all transceivers similar to  many other works on RIS (either STAR or regular) such as \cite{ni2021resource, huang2019reconfigurable, wu2019intelligent, kammoun2020asymptotic, pan2020multicell, zhang2020intelligent, zuo2020resource, mu2020exploiting, yang2021reconfigurable, yu2020joint, jiang2022interference, liu2022simultaneously, 9774942}. 
Additionally, it should be noted that in this work, we focus on resource allocation for URLLC systems in which it is common to assume  perfect, global and instantaneous CSI \cite{nasir2020resource, wang2023flexible, singh2020energy, ghanem2021joint, he2021beamforming, choi2021mimo, ghanem2020resource}.
Investigating the performance of RIS with perfect CSI can show the main tradeoffs in the system design and provide an upper bound for the system performance. Indeed, studying the performance of RIS with perfect CSI is useful to show whether/how (STAR-)RIS provides any benefit in URLLC systems. 
If the benefits of RIS are minor with perfect CSI, then it may suggest that RIS cannot be beneficial in more realistic scenarios. 
We also assume that BSs use short-length packets to transmit data to the users. 
Without loss of generality, we consider a symmetric scenario with the same number of BS antennas or users per cell. 
However, our work can be easily extended to an asymmetric scenario in which each BS has a different number of antennas/associated users. 
\begin{figure}[t!]
    \centering
\includegraphics[width=.3\textwidth]{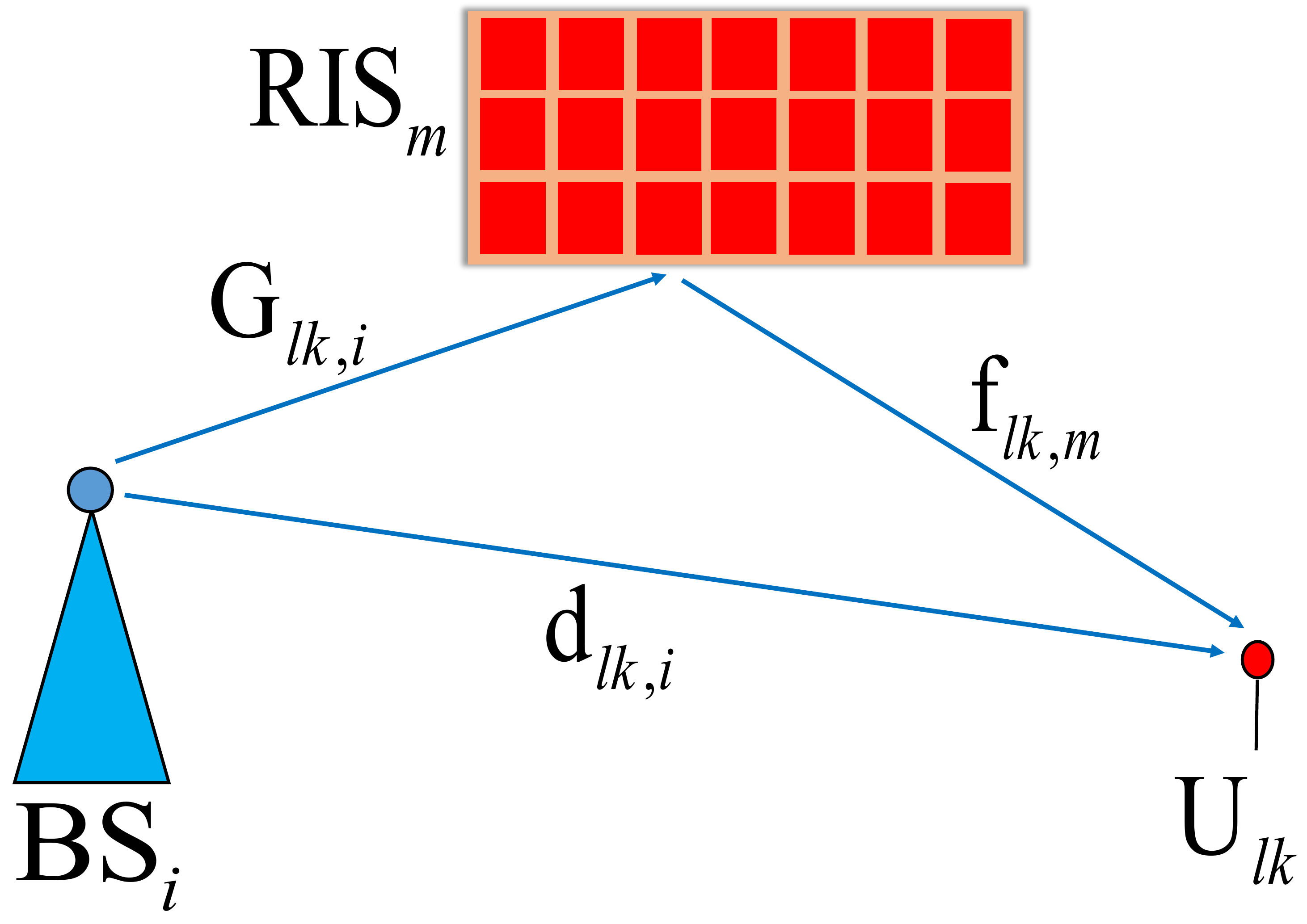}
     \caption{Channel model in a typical RIS-assisted system.}
	\label{Fig-sys-model2}
\end{figure}

\subsection{RIS model}\label{sec-ris}
 In this paper, we consider both regular and STAR-RISs. Here, we 
briefly describe the model here and refer the readers to \cite{pan2020multicell, soleymani2022improper}, \cite[Sec. II]{wu2021intelligent} or \cite{9774942} for a detailed description/review on the features of RIS and/or STAR-RIS.
\subsubsection{Regular RIS}As shown in Fig. \ref{Fig-sys-model2}, the channel between BS $i$ and user $k$ associated to BS $l$, i.e., u$_{lk}$, is 
 \begin{equation}\label{eq-ch}
\mathbf{h}_{lk,i}\left(\{\bm{\Theta}\}\right)=\underbrace{\sum_{m=1}^M\mathbf{f}_{lk,m}\bm{\Theta}_m\mathbf{G}_{mi}}_{\text{Links through RIS}}+
\underbrace{\mathbf{d}_{lk,i}}_{\text{Direct link}}\in \mathbb{C}^{1\times N_{BS}},
\end{equation}
where $\mathbf{d}_{lk,i}$ 
is the direct link between BS $i$ and u$_{lk}$, $\mathbf{G}_{mi}$ 
is the channel matrix between BS $i$ and RIS $m$, $\mathbf{f}_{lk,m}$ 
is the channel vector between RIS $m$ and u$_{lk}$, 
and  $\bm{\Theta}_m$ 
is
\begin{equation}
\bm{\Theta}_m=\text{diag}\left(\theta_{m1}, \theta_{m2},\cdots,\theta_{mN_{RIS}}\right),
\end{equation}
where $\theta_{mi}$s for all $m$, $i$ are the reflecting coefficients. 
Hereafter, we drop the dependency of the channels with respect to $\{\bm{\Theta}\}$ for notational simplicity. 

The channels can be optimized only through  the reflecting coefficients $\theta_{mi}$, which are complex-valued parameters. There are different assumptions for modeling the reflecting coefficients. In this paper, we consider three different feasibility sets for $\theta_{mi}$s based on the models in \cite[Sec. II]{wu2021intelligent}. 
An upper bound for RIS performance can be achieved by considering the amplitude and phase of  $\theta_{mi}$s as independent random variables, which results in the following feasibility set \cite[Eq. (11)]{wu2021intelligent}
\begin{equation}
\mathcal{T}_{U}=\left\{\theta_{m_n}:|\theta_{m_n}|^2\leq 1 \,\,\,\forall m,n\right\}.
\end{equation}
 A more common feasibility set is 
\begin{equation}
\mathcal{T}_{I}=\left\{\theta_{mi}:|\theta_{mi}|= 1 \,\,\,\forall m,i\right\},
\end{equation}
which  has been widely used in the literature \cite{di2020smart, wu2021intelligent, wu2019intelligent, kammoun2020asymptotic, yu2020joint, pan2020multicell, zhang2020intelligent}.
Another practical feasibility set is \cite{abeywickrama2020intelligent}
\begin{equation}
\mathcal{T}_{C}\!=\!\left\{\theta_{mi}\!:|\theta_{mi}|= \mathcal{F}(\angle \theta_{mi}), \,\angle \theta_{mi}\in[-\pi,\pi]  \,\forall m,i\right\}\!.
\end{equation}
In this model, the amplitude  of each RIS element is a deterministic function of its phase as \cite{abeywickrama2020intelligent}
\begin{equation}\label{eq*=*}
\mathcal{F}(\angle \theta_{mi})= |\theta|_{\min}+( 1\!-|\theta|_{\min})\left(\!\!\frac{\sin\left(\angle \theta_{mi}-\phi\right)+1}{2}\!\right)^{\alpha}\!\!\!,
\end{equation}
where $|\theta|_{\min}$, $\alpha$, and $\phi$ are non-negative constant values. 
\begin{figure}[t!]
    \centering
\includegraphics[width=.45\textwidth]{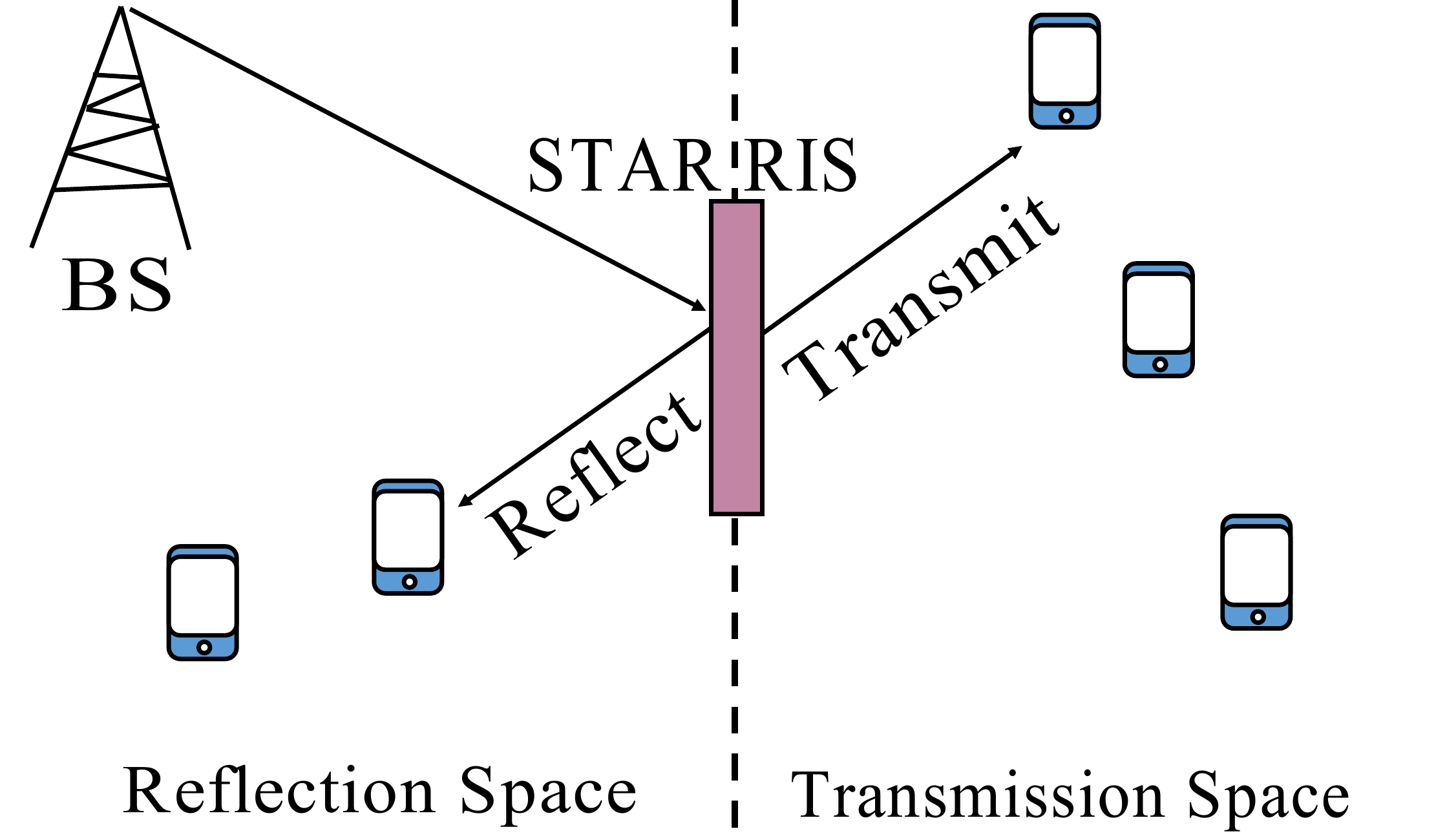}
     \caption{A typical STAR-RIS-assisted system.}
	\label{Fig-3-sys}
\end{figure}
\subsubsection{STAR-RIS}\label{sec=ap=ris-d}
In a regular RIS, each RIS component can only reflect the signals. However, STAR-RIS allows each component to not only reflect, but also transmit signals simultaneously, as its name suggests. Thus, a STAR-RIS can cover a larger area as shown in Fig. \ref{Fig-3-sys}. Indeed, there are two spaces for each RIS: reflection space (RS) and transmission space (TS), while a regular RIS can only reflect \cite{mu2021simultaneously}. 
Note that in this paper, we consider only passive STAR-RIS in which each STAR-RIS element cannot amplify the signal power neither in the RS nor in TS. 

In STAR-RIS-assisted systems,  each user belongs to either RS or TS \cite{mu2021simultaneously}.
We represent the reflecting and transmit coefficients of the $i$-th component of the $m$-th RIS by, respectively, $\theta_{mi}^{r}$ and $\theta_{mi}^{t}$.
In case of STAR-RIS, the channel between BS $i$ and u$_{lk}$ is  \cite[Eq. (2)]{mu2021simultaneously} 
\begin{equation*}
\mathbf{h}_{lk,i}
=\underbrace{\sum_{m=1}^M\mathbf{f}_{lk,m}\bm{\Theta}_m^{r/t}\mathbf{G}_{mi}}_{\text{Links through RIS}}+
\underbrace{\mathbf{d}_{lk,i}}_{\text{Direct link}},
\end{equation*}
where
$\bm{\Theta}_m^r
=\text{diag}\left(\theta_{m_1}^r, \theta_{m_2}^r,\cdots,\theta_{m_{N_{RIS}}}^r\right)$, and 
$\bm{\Theta}_m^t
=\text{diag}\left(\theta_{m_1}^t, \theta_{m_2}^t,\cdots,\theta_{m_{N_{RIS}}}^t\right)$. 
The upper bound for the performance of STAR-RIS is given by the following feasibility set \cite[Eq. (2)]{xu2021star}
\begin{equation}\label{model-2}
\mathcal{T}_{SU}=\left\{\theta_{mn}^r,\theta_{mn}^t:|\theta_{m_i}^{r}|^2+|\theta_{m_i}^{t}|^2\leq 1 \,\,\,\forall m,n\right\}.
\end{equation}
 A more common feasibility set is \cite{wu2021coverage}, \cite[Eq. (1)]{liu2021star}
\begin{equation}\label{model-1}
\mathcal{T}_{SI}=\left\{\theta_{mn}^r,\theta_{mn}^t:|\theta_{m_i}^{r}|^2+|\theta_{m_i}^{t}|^2= 1 \,\,\,\forall m,i\right\}.
\end{equation}
In these feasibility sets, the phases of the reflection and transmission coefficients can be independently optimized.  
However, another feasibility set has been studied in \cite{9774942, liu2022simultaneously} in which the phases completely depend on each other. 
This feasibility set can be written as \cite[Proposition 1]{9774942}
\begin{multline}\label{model-3}
\mathcal{T}_{SN}=\left\{\theta_{mn}^r,\theta_{mn}^t:|\theta_{m_i}^{r}|^2+|\theta_{m_i}^{t}|^2= 1,\right.
\\
\left. \mathfrak{R}\left\{\theta_{m_i}^{r^*}\theta_{m_i}^{t}\right\}= 0\,\,\,\forall m,i\right\}.
\end{multline}
\begin{lemma}
The two constraints $|\theta_{m_i}^{r}|^2+|\theta_{m_i}^{t}|^2= 1$ and $\mathfrak{R}\left\{\theta_{m_i}^{r^*}\theta_{m_i}^{t}\right\}= 0$ are equivalent to the 
following constraints:
\begin{align}\label{st-eq-}
|\theta_{m_i}^{r}+\theta_{m_i}^{t}|^2&\leq 1,\\
\label{st-eq-1}
|\theta_{m_i}^{r}-\theta_{m_i}^{t}|^2&\leq 1,\\
|\theta_{m_i}^{r}|^2+|\theta_{m_i}^{t}|^2&= 1.
\label{st-eq-2}
\end{align}
\end{lemma}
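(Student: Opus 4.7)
The plan is to exploit the standard polarization-type identity
\[
|\theta_{m_i}^{r}\pm\theta_{m_i}^{t}|^2 \;=\; |\theta_{m_i}^{r}|^2 + |\theta_{m_i}^{t}|^2 \;\pm\; 2\,\mathfrak{R}\bigl\{\theta_{m_i}^{r^*}\theta_{m_i}^{t}\bigr\},
\]
which relates all four quantities appearing in the two formulations. Once this is in hand, the claim reduces to elementary arithmetic on real numbers, and the proof splits cleanly into the two implications. I would also state once, for the record, that the indices $m,i$ are fixed throughout so that the argument is really just a statement about two complex numbers.

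For the forward direction, I would assume $|\theta_{m_i}^{r}|^2+|\theta_{m_i}^{t}|^2=1$ and $\mathfrak{R}\{\theta_{m_i}^{r^*}\theta_{m_i}^{t}\}=0$ and substitute directly into the identity, which yields $|\theta_{m_i}^{r}+\theta_{m_i}^{t}|^2=|\theta_{m_i}^{r}-\theta_{m_i}^{t}|^2=1$, so both inequalities \eqref{st-eq-} and \eqref{st-eq-1} hold (with equality), and \eqref{st-eq-2} is assumed. This direction is a one-line computation.

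For the reverse direction, assume \eqref{st-eq-}, \eqref{st-eq-1}, \eqref{st-eq-2}. Substituting $|\theta_{m_i}^{r}|^2+|\theta_{m_i}^{t}|^2=1$ from \eqref{st-eq-2} into the identity, \eqref{st-eq-} becomes $1+2\,\mathfrak{R}\{\theta_{m_i}^{r^*}\theta_{m_i}^{t}\}\le 1$, i.e. $\mathfrak{R}\{\theta_{m_i}^{r^*}\theta_{m_i}^{t}\}\le 0$, while \eqref{st-eq-1} becomes $1-2\,\mathfrak{R}\{\theta_{m_i}^{r^*}\theta_{m_i}^{t}\}\le 1$, i.e. $\mathfrak{R}\{\theta_{m_i}^{r^*}\theta_{m_i}^{t}\}\ge 0$. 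Combining the two inequalities forces $\mathfrak{R}\{\theta_{m_i}^{r^*}\theta_{m_i}^{t}\}=0$, which together with \eqref{st-eq-2} recovers the original constraints.

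There is no serious obstacle in this lemma; the only mildly subtle point is that in the converse, one does not have to assume \eqref{st-eq-} and \eqref{st-eq-1} are active — the inequalities alone, combined with the \emph{equality} \eqref{st-eq-2}, are already tight enough to pin $\mathfrak{R}\{\theta_{m_i}^{r^*}\theta_{m_i}^{t}\}$ to zero from both sides. This is worth pointing out explicitly because it is the reason the reformulation is useful for optimization: two convex (quadratic) inequalities replace a non-convex real-part equality, at the cost of keeping only the unit-modulus equality \eqref{st-eq-2}.
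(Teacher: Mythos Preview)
Your proof is correct and follows essentially the same route as the paper: both rely on the identity $|\theta_{m_i}^{r}\pm\theta_{m_i}^{t}|^2=|\theta_{m_i}^{r}|^2+|\theta_{m_i}^{t}|^2\pm 2\,\mathfrak{R}\{\theta_{m_i}^{r^*}\theta_{m_i}^{t}\}$, substitute the unit-energy equality, and observe that the two inequalities become $\pm\,\mathfrak{R}\{\theta_{m_i}^{r^*}\theta_{m_i}^{t}\}\le 0$, forcing the real part to vanish. Your write-up is simply more explicit about treating the two implications separately, whereas the paper compresses this into a single equivalence argument.
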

\begin{proof}
We have the following equality 
\begin{multline}
|\theta_{m_i}^{r}\pm\theta_{m_i}^{t}|^2=|\theta_{m_i}^{r}|^2+|\theta_{m_i}^{t}|^2\pm2\mathfrak{R}\left\{\theta_{m_i}^{r^*}\theta_{m_i}^{t}\right\}.
\end{multline}
Substituting $|\theta_{m_i}^{r}|^2+|\theta_{m_i}^{t}|^2$ by $1$, the constraints \eqref{st-eq-} and \eqref{st-eq-1} simplify to
$\pm\mathfrak{R}\left\{\theta_{m_i}^{r^*}\theta_{m_i}^{t}\right\}\leq 0$, 
which is equivalent to $\mathfrak{R}\left\{\theta_{m_i}^{r^*}\theta_{m_i}^{t}\right\}=0$.
\end{proof}

\subsection{Signal model}
The broadcast signal from BS $l$ is 
\begin{equation}
\mathbf{x}_l=
\sum_{k=1}^K\mathbf{x}_{lk}s_{lk}
\in\mathbb{C}^{N_{BS}\times 1},
\end{equation}
 where $s_{lk}\sim\mathcal{CN}\left(0,1\right)$ is the  message intended for the $k$-th user associated to the $l$-th BS, denoted by u$_{lk}$, and $\mathbf{x}_{lk}$ is the corresponding beamforming vectors, which is an optimization parameter. 
Note that $s_{lk}$s for all $l,k$ are iid proper Gaussian signals.

The received signal for the user u$_{lk}$ is
\begin{subequations}\label{rec-sig}
\begin{align}
y_{lk}&=\sum_{i=1}^L\mathbf{h}_{lk,i}\sum_{j=1}^K\mathbf{x}_{ij}s_{ij}
+n_{lk}
\\&
=
\underbrace{\mathbf{h}_{lk,l}
\mathbf{x}_{lk}s_{lk}}_{\text{Desired signal}}+
\underbrace{\mathbf{h}_{lk,l}\!\!\!\!
\sum_{j=1,j\neq k}^{K}\!\!\!\!\mathbf{x}_{lj}
s_{lj}}_{\text{Intracell interference}}+
\underbrace{ \!\!\sum_{i=1,i\neq l}^L\!\!\!\mathbf{h}_{lk,i}
\mathbf{x}_{i}}_{\text{Intercell interference}}\!\!\!+
\underbrace{n_{lk}}_{\text{Noise}},
\end{align}
\end{subequations}
where 
$\mathbf{h}_{lk,l}\in\mathbb{C}^{1\times N_{BS}}$ is the  channel between BS $l$ and u$_{lk}$, and $n_{lk}\sim\mathcal{CN}\left(0,\sigma^2\right)$ is additive  white Gaussian noise. 
As can be easily verified through \eqref{rec-sig}, the received signal and the interference term are proper Gaussian signals. 
Furthermore, the noise, interference and desired signals are independent from each other. 
Note that there exists two types of interference in the received signal $y_{lk}$: intracell and intercell interference. Intracell interference at u$_{lk}$ is caused by BS $l$, while intercell interference is caused by the other BSs.

\subsection{Rate and energy efficiency expressions}\label{sec=re}
Each user decodes its own message, treating all other signals as noise. 
Thus, the rate of  u$_{lk}$ is  \cite{polyanskiy2010channel}, \cite[Eq. (8)]{nasir2021cell}
\begin{equation}\label{rate}
{r}_{lk}=
\underbrace{\log\left(1+\gamma_{lk}\right)}_{\text{\small Shannon Rate}}
-
\underbrace{Q^{-1}(\epsilon^c)\sqrt{\frac{V_{lk}}{n_t}}}_{\delta_{lk}(\{\mathbf{x}\},\{\bm{\Theta}\})},
\end{equation}
where $V_{lk}$ is the channel dispersion for decoding $s_{lk}$ at u$_{lk}$, $n_t$ is the packet length, $Q^{-1}$ is the inverse of the Gaussian Q-function, $\epsilon^c$ is an acceptable decoding error probability, which indicates that one out of $1/\epsilon^c$ short-length packets may experience outage, 
and  $\gamma_{lk}$ is the corresponding SINR given by
\begin{equation}\label{sinr}
\gamma_{lk}=
\frac{|\mathbf{h}_{lk,l}
\mathbf{x}_{lk}|^2}
{\sigma^2
+
\sum_{ij\neq lk}|\mathbf{h}_{lk,i}
\mathbf{x}_{ij}|^2},
\end{equation}
where 
\begin{equation}%
\sum_{[ij]\neq [lk]}|\mathbf{h}_{lk,i}
\mathbf{x}_{ij}|^2=\sum_{ij}|\mathbf{h}_{lk,i}
\mathbf{x}_{ij}|^2-|\mathbf{h}_{lk,l}
\mathbf{x}_{lk}|^2. 
\end{equation}
We represent the gap between the Shannon rate and the FBL by $\delta_{lk}(\{\mathbf{x}\},\{\bm{\Theta}\},\epsilon^c)=Q^{-1}(\epsilon^c)\sqrt{
{V_{lk}}/{n_t}}$, which is a function of the channel dispersion. 
Note that $\epsilon^c$ is related to the reliability constraint, and the gap between the Shannon rate and the FBL increases with the reliability. In other words, to ensure a more reliable communication (lower $\epsilon^c$), we should transmit data at a rate lower than the Shannon rate. 
The optimal channel dispersion is \cite{polyanskiy2010channel}
\begin{equation}\label{dis-opt}
V_{lk}^{opt}=1-\frac{1}{\left(1+\gamma_{lk}\right)^2}=\frac{\gamma_{lk}}{1+\gamma_{lk}}\left(1+\frac{1}{1+\gamma_{lk}}\right).
\end{equation}
However, it is not achievable by iid Gaussian signals when there exists interference. 
In \cite{scarlett2016dispersion}, the authors proposed a simple and practical scheme, which is not dispersion optimal. 
The channel dispersion for the scheme in \cite{scarlett2016dispersion} is
\begin{equation}\label{dis-}
V_{lk}=2\frac{\gamma_{lk}}{1+\gamma_{lk}}=2\left(1-\frac{1}{1+\gamma_{lk}}\right).
\end{equation}
Note that it can be easily verified that the dispersion in \eqref{dis-} is an upper bound for  \eqref{dis-opt}. 
Additionally, it should be noted that the FBL rate, Shannon rate, channel dispersion and $\delta$ are a function of $\epsilon^c$, $\{\mathbf{x}\}$ and $\bm{\Theta}$. However, due to a notational simplicity, we drop this dependency in the equations unless it causes confusion. 
\begin{figure}[t!]
    \centering
\includegraphics[width=.49\textwidth]{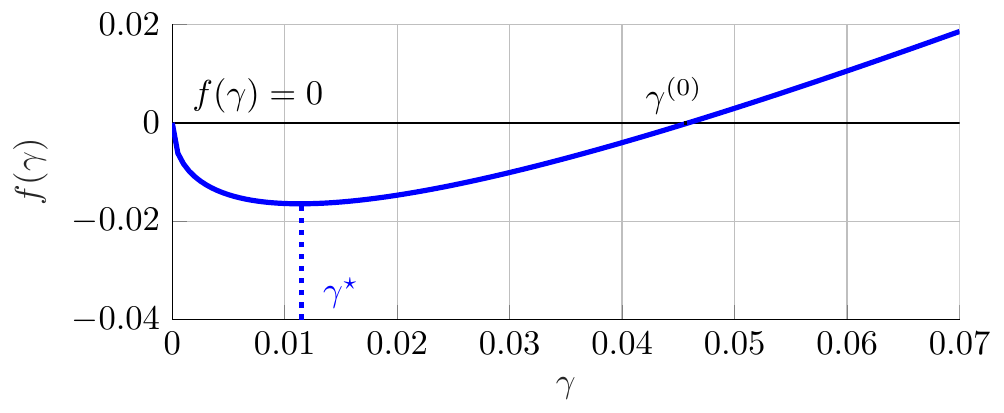}
    \caption{$f(\gamma)$ versus $\gamma$ for $a=0.2185$. It represents $r_{l,k}/\ln2$ as a function of $\gamma_{l,k}$ for $n_t=200$, and $\epsilon=10^{-3}$.
}
	\label{Fig-3} 
\end{figure}
In the following lemma, we take a look at the structure of \eqref{rate}, which provides an insight for optimizing over the FBL rates by the NA.
\begin{lemma}\label{lem-2}
Consider the function 
\begin{equation*}
f(\gamma)=\ln\left(1+\gamma\right)-a\sqrt{\frac{\gamma}{1+\gamma}},
\end{equation*}
where $\gamma\geq 0$ is a variable, and $a>0$ is a given and constant parameter. 
Then, $f(\gamma)$ is minimized at $\gamma^\star$, where $f(\gamma^\star)<0$. 
Moreover, $f(\gamma)$ is strictly decreasing in $0\leq\gamma<\gamma^\star$, and strictly increasing for $\gamma>\gamma^\star$ (see Fig. \ref{Fig-3}). Additionally, $f(\gamma)$ has two root given by $\gamma=0$, and $\gamma=\gamma^{(0)}$.
\end{lemma}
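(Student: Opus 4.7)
The plan is to establish the shape of $f$ by a direct first-derivative analysis, then use monotonicity and the intermediate value theorem to pin down the number and location of the roots.

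First I would verify the boundary values. At $\gamma=0$ the Shannon log term vanishes and so does $\sqrt{\gamma/(1+\gamma)}$, giving $f(0)=0$, which already exhibits one of the two claimed roots. As $\gamma\to\infty$, $\ln(1+\gamma)\to\infty$ while $\sqrt{\gamma/(1+\gamma)}\to 1$, so $f(\gamma)\to\infty$.

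Next I would compute
\begin{equation*}
f'(\gamma)=\frac{1}{1+\gamma}-\frac{a}{2(1+\gamma)^{3/2}\sqrt{\gamma}},
\end{equation*}
and find its zeros. Setting $f'(\gamma)=0$ and clearing denominators reduces to the quadratic $\gamma(1+\gamma)=a^2/4$, whose unique nonnegative root is
\begin{equation*}
\gamma^\star=\frac{-1+\sqrt{1+a^2}}{2}>0.
\end{equation*}
Thus $f$ has exactly one stationary point on $(0,\infty)$. Because $f'(\gamma)\to -\infty$ as $\gamma\to 0^+$ (the second term blows up while the first stays bounded) and $f'(\gamma)\to 0^+$ from above as $\gamma\to\infty$ (the first term dominates), $f'$ must be negative on $(0,\gamma^\star)$ and positive on $(\gamma^\star,\infty)$. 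This gives the claimed strict monotonicity pattern and identifies $\gamma^\star$ as the global minimum.

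Finally I would count the roots. Since $f$ is strictly decreasing on $[0,\gamma^\star]$ with $f(0)=0$, we have $f(\gamma^\star)<0$, which also settles the statement that the minimum value is negative. On $[\gamma^\star,\infty)$, $f$ is continuous, strictly increasing, starts at $f(\gamma^\star)<0$, and tends to $+\infty$; by the intermediate value theorem it has exactly one zero on this interval, which I call $\gamma^{(0)}$. Together with $\gamma=0$, these are the only two roots.

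I do not anticipate a serious obstacle; the only slightly delicate point is justifying that the quadratic reduction is equivalent to $f'(\gamma)=0$ on $\gamma>0$ (so that no spurious critical points are introduced or lost), and that $f'(0^+)=-\infty$ really forces $f'<0$ immediately to the right of $0$ despite the singularity of the expression at $\gamma=0$. Both follow by inspecting the explicit form of $f'$ above.
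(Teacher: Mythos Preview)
Your proposal is correct and follows essentially the same first-derivative route as the paper, which factors
\[
f'(\gamma)=\frac{1}{1+\gamma}\left[1-\frac{a}{2}\frac{1}{\sqrt{\gamma(1+\gamma)}}\right]
\]
and observes that $1/\sqrt{\gamma(1+\gamma)}$ decreases monotonically from $\infty$ to $0$, forcing a unique sign change. Your version is in fact more complete: the paper's argument stops after establishing the sign pattern of $f'$, whereas you additionally verify $f(0)=0$, argue $f(\gamma^\star)<0$ from strict decrease, invoke the intermediate value theorem for the second root $\gamma^{(0)}$, and even extract the closed form $\gamma^\star=\tfrac{-1+\sqrt{1+a^2}}{2}$, none of which appears explicitly in the paper.
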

\begin{proof}
The derivative of $f(\gamma)$ with respect to $\gamma$ is 
\begin{equation*}
\frac{\partial f}{\partial \gamma}=\frac{1}{1+\gamma}-\frac{a}{2}\left(\frac{\gamma}{1+\gamma}\right)^{-\frac{1}{2}}
\left(\frac{1}{1+\gamma}\right)^{{2}},
\end{equation*}
which can be simplified as
\begin{equation*}
\frac{\partial f}{\partial \gamma}=\frac{1}{1+\gamma}\left[1-\frac{a}{2}
\frac{1}{\sqrt{\gamma(1+\gamma)}}
\right]. 
\end{equation*}
The function $\frac{1}{\sqrt{\gamma(1+\gamma)}}$ is strictly decreasing in $\gamma$ and takes values $\infty$ for $\gamma=0$ and $0$ for $\gamma=\infty$, which proves the lemma.
\end{proof}
 The achievable rate with FBL in \eqref{rate} is a difference of two terms, which are functions of $\gamma_{lk}$. It should be emphasized that the expression in \eqref{rate} is indeed an approximation for the actual achievable rate and may not be accurate in very low SINR regimes and/or for a very short packet length and/or low decoding error probability \cite{erseghe2016coding, erseghe2015evaluation, cocskun2019efficient, lancho2019single, lancho2020saddlepoint}. 
According to Lemma \ref{lem-2}, $r_{lk}$ can be negative for very low $\gamma_{lk}$ (see Fig. \ref{Fig-3}), which implies that \eqref{rate} is not an accurate approximation for $\gamma_{lk}\ll 1$, which is also confirmed by the results in \cite{lancho2020saddlepoint}. 
We refer the reader to \cite[Sec. IV.C]{polyanskiy2010channel} for detailed discussions on the accuracy of the NA with different parameters.

The EE of a user is defined as the ratio between its data rate and the power consumption for transmitting the data as \cite{zappone2015energy}
\begin{equation}
e_{lk}=\frac{r_{lk}}{p_c+\eta\mathbf{x}_{lk}^H\mathbf{x}_{lk}},
\end{equation}
where $\eta^{-1}$ is the power efficiency of each BS, and $p_c$ is the constant power consumption in the network for transmitting data to a user, which is given by \cite[Eq. (27)]{soleymani2022improper}. 
Another metric for EE is the global EE (GEE), which considers the performance of the whole network. The GEE is defined as ratio between the total achievable rate and the total power consumption of the network, i.e., \cite{zappone2015energy}
\begin{equation}
GEE=\frac{\sum_{lk}r_{lk}}{LKp_c+\eta\sum_{lk}\mathbf{x}_{lk}^H\mathbf{x}_{lk}}.
\end{equation}

\subsection{Problem statement}
We consider a general optimization problem, which includes a large variety of optimization problems. This general optimization problem can be formulated as
\begin{subequations}\label{ar-opt}
\begin{align}
 \underset{\{\mathbf{x}\}\in\mathcal{X},\{\bm{\Theta}\}\in\mathcal{T}
 }{\max}  & 
  f_0\!\left(\left\{\mathbf{x}\right\}\!,\!\{\bm{\Theta}\}\right)\!\!\!\! &
  \text{s.t.}   \,&  f_i\left(\left\{\mathbf{x}\right\}\!,\!\{\bm{\Theta}\}\right)\geq0,\,\forall i,
\\
\label{4-c}
 &&&
r_{lk}\geq r^{th},\,\,\,\,\forall l,k,
 \end{align}
\end{subequations}
where $\mathcal{X}$ and $\mathcal{T}$ are, respectively, the feasibility sets for $\{\mathbf{x}\}$ and $\{\bm{\Theta}\}$. The feasibility set $\mathcal{X}$ is
\begin{equation}
\mathcal{X}=\left\{\{\mathbf{x}\}:\sum_{\forall k}\mathbf{x}_{lk}^H\mathbf{x}_{lk}\leq p_l,\forall l,k \right\},
\end{equation}
where $p_l$ is the power budget for BS $l$.
The constraint \eqref{4-c} is related to the latency constraint of the system. If the maximum tolerable latency is $t_c$ seconds, then the minimum achievable rate of each user per bandwidth unit should be greater than $r^{th}= \frac{n_t}{t_cw}$, where $w$ is the channel bandwidth in Hz, and $n_t$ is the packet length in bits.
Moreover, $f_i$s for all $i$ are linear functions of rates/EEs. 
Note that
 $f_i$  can also include some other linear/quadratic/convex/concave constraints in beamforming vectors/channels such as an energy harvesting constraint and/or the so-called interference temperature. However, due to a space restriction, we do not consider such constraints in this work and leave them for a future study. 

In this paper, we aim at proposing a unified optimization framework to solve the general optimization problem \eqref{ar-opt}, which  includes, for example, maximization of weighted-sum rate, maximization of minimum-weighted rates, global EE, and maximization of the minimum-weighted EE. Note that it could be possible to focus on a specific optimization problem and provide a detailed solution for it with different approaches, each with a different behavior in terms of complexity and optimality. 
However, it is also very important to concentrate on a general methodology for solving various problems with different utility/cost functions. In this work, we prefer to choose the second approach especially since the performance of RIS (either regular or STAR) should be further investigated in FBL regimes, and a general optimization framework can provide an effective tool to do so.

\section{Generalized optimization framework for systems without RIS}\label{sec-iii}
In this section, we consider the optimization of the beamforming vectors $\{\mathbf{x}\}$ for systems without RIS. Note that the algorithms in this section can be applied to RIS-assisted systems when the reflecting coefficients are fixed. 
The considered optimization problem in this section is 
 \begin{subequations}\label{ar-opt-x}
\begin{align}
\underset{\{\mathbf{x}\}\in\mathcal{X}
 }{\max}  & 
  f_0\left(\left\{\mathbf{x}\right\}\right) &
  \text{s.t.}   \,\,&  f_i\left(\left\{\mathbf{x}\right\}\right)\geq0,\,\forall i,
  \\
\label{26b}
 &&&
r_{lk}\geq r^{th},\,\,\,\,\forall l,k,
 \end{align}
\end{subequations}
Unfortunately, \eqref{ar-opt-x} is not convex since the rates are not concave in $\{\mathbf{x}\}$.
To solve \eqref{ar-opt-x}, we employ majorization minimization (MM), which is an iterative algorithm that consists of two steps in each iteration: majorization and minimization \cite{sun2017majorization}. 
In the majorization step, the non-convex constraints (and/or objective function) are approximated by suitable surrogate functions. 
Note that the surrogate functions in MM algorithms should fulfill three specific conditions mentioned in, e.g., \cite[Sec. III]{soleymani2020improper}. 
In the minimization step, the corresponding surrogate optimization problem is solved. 
Note that MM has many applications not only in communications, but also in machine learning, signal processing, and other research areas \cite{sun2017majorization}.  In this paper, we specialize MM to solve \eqref{ar-opt-x} in the context of STAR-RIS-assisted URLLC systems. 
In general, 
MM starts with a feasible initial point and converges to a stationary point of the considered optimization problem, which meets the first order optimality constraint and satisfies the Karush-Kuhn-Tucker (KKT) conditions \cite{lanckriet2009convergence}. Since we employ MM, our proposed framework for systems without RIS converges to a stationary point of \eqref{ar-opt-x}. The final point of MM-based algorithms may depend on the initial point, which should be feasible and can be chosen either randomly or heuristically \cite{lipp2016variations}. 
As indicated in Section \ref{sec=re}, the rates in \eqref{rate} may not be accurate for very low SINR, i.e, $\gamma_{lk}\ll 1$. 
Additionally, the latency constraint in \eqref{26b} may not be satisfied for a random initial point.
Thus, to obtain a feasible and suitable initial point, 
we can employ the approach in Appendix \ref{simp-sec}, which takes the solution of the maximization of the minimum SINR of users as an initial point. 

In the proposed framework, we firstly approximate the rates by suitable concave lower-bound surrogate functions and then, solve the corresponding surrogate optimization problem. 
To this end, we employ the lower-bounds in the following lemma.
\begin{lemma}\label{lem-1}
A concave lower bound for $r_{lk}$ is
\begin{multline}\label{eq=lem=q}
r_{lk}\geq \tilde{r}_{lk}=
a_{lk}
+
\frac{2\mathfrak{R}
\left\{
\left(
\mathbf{h}_{lk,l}
\mathbf{x}_{lk}^{(t-1)}
\right)^*
\mathbf{h}_{lk,l}
\mathbf{x}_{lk}
\right\}
}{
\sigma^2
+
\sum_{[ij]\neq [lk]}\left|\mathbf{h}_{lk,i}
\mathbf{x}_{ij}^{(t-1)}\right|^2
}
\\
+
\frac{2Q^{-1}(\epsilon^c)}{\sqrt{n_tV_{lk}^{(t-1)}}}
\frac{
\sigma^2\!+\!\sum_{[ij]\neq [lk]}\mathfrak{R}\!\left\{\!\left(\mathbf{h}_{lk,i}
\mathbf{x}_{ij}^{(t-1)}\right)^*\!
\mathbf{h}_{lk,i}
\mathbf{x}_{ij}\!
\right\}
}{\sigma^2+\sum_{ij}\left|\mathbf{h}_{lk,i}
\mathbf{x}_{ij}^{(t-1)}\right|^2}
\\
-
b_{lk}
\frac{
\sigma^2
+
\sum_{ij}\left|\mathbf{h}_{lk,i}
\mathbf{x}_{ij}\right|^2
}{
\sigma^2
+
\sum_{ij}\left|\mathbf{h}_{lk,i}
\mathbf{x}_{ij}^{(t-1)}\right|^2
},
\end{multline}
where $t$ is the number of the current iteration, $a_{lk}$, and $b_{lk}$ 
are constants and, respectively, give by
\begin{align*}
a_{lk}&=\log\left(1+\gamma_{lk}^{(t-1)}\right)
-
\gamma_{lk}^{(t-1)}
\\
&\hspace{1cm}-
\frac{Q^{-1}(\epsilon^c)}{\sqrt{n_t}}\left(\frac{\sqrt{V_{lk}^{(t-1)}}}{2}
+\frac{1}{\sqrt{V_{lk}^{(t-1)}}}\right),
\\
b_{lk}&=\gamma_{lk}^{(t)}+\frac{\zeta_{lk}^{(t-1)}Q^{-1}(\epsilon^c)}{\sqrt{n_tV_{lk}^{(t-1)}}},
\end{align*}
where $\gamma_{lk}^{(t-1)}$ and $V_{lk}^{(t-1)}$ are, respectively, obtained by replacing $\{\mathbf{x}^{(t-1)}\}$ in \eqref{sinr} and \eqref{dis-}. Moreover, 
\begin{equation*}
\zeta_{lk}^{(t-1)}
=
\frac{
\sigma^2+\sum_{[ij]\neq [lk]}|\mathbf{h}_{lk,i}
\mathbf{x}_{ij}^{(t-1)}|^2
}{\sigma^2+\sum_{ij}|\mathbf{h}_{lk,i}
\mathbf{x}_{ij}^{(t-1)}|^2}.
\end{equation*}
The concave lower bound in \eqref{eq=lem=q} is quadratic in $\{\mathbf{x}\}$, and is obtained by deriving a concave lower bound for the Shannon rate as well as by finding a convex upper bound for $\delta_{lk}$.
\end{lemma}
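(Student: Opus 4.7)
Since $r_{lk}=\log(1+\gamma_{lk})-\frac{Q^{-1}(\epsilon^c)}{\sqrt{n_t}}\sqrt{V_{lk}}$, my plan is to build the surrogate by separately deriving a concave lower bound for the Shannon term and a convex upper bound for the FBL penalty $\sqrt{V_{lk}}$, and then subtracting. For notational brevity write $a=\mathbf{h}_{lk,l}\mathbf{x}_{lk}$, $c=\sigma^{2}+\sum_{[ij]\neq[lk]}|\mathbf{h}_{lk,i}\mathbf{x}_{ij}|^{2}$, and $P=c+|a|^{2}$, with subscript $0$ denoting evaluation at $\{\mathbf{x}^{(t-1)}\}$, so that $\gamma_{lk}=|a|^{2}/c$ and $V_{lk}=2|a|^{2}/P$.

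For the Shannon part I would invoke the WMMSE / quadratic-transform identity
\[
\log(1+\gamma_{lk})=\max_{u,w}\bigl\{\log w-w\,e(u)+1\bigr\},\quad e(u)=|1-u^{*}a|^{2}+|u|^{2}c,
\]
and fix $u=a_{0}/P_{0}$ (the MMSE filter at the previous iterate) and $w=1+\gamma_{0}=P_{0}/c_{0}$. After simplification this gives the standard quadratic surrogate
\[
\log(1+\gamma_{lk})\;\geq\;\log(1+\gamma_{0})-\gamma_{0}+\frac{2\,\mathfrak{R}\{a_{0}^{*}a\}}{c_{0}}-\gamma_{0}\,\frac{P}{P_{0}},
\]
which is concave in $\{\mathbf{x}\}$ (constant plus linear plus $-\gamma_{0}/P_{0}$ times the convex quadratic $P$) and matches both the value and the gradient of $\log(1+\gamma_{lk})$ at $\mathbf{x}^{(t-1)}$.

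For the FBL penalty, the tangent bound on the concave $\sqrt{\cdot}$ at $V_{0}$ gives $\sqrt{V_{lk}}\leq \tfrac{\sqrt{V_{0}}}{2}+\tfrac{V_{lk}}{2\sqrt{V_{0}}}$, so it remains to upper bound $V_{lk}=2(1-c/P)$ by a convex function of $\{\mathbf{x}\}$, i.e.\ to lower bound $c/P$ by a concave one. Let $L(x)=\sigma^{2}+\sum_{[ij]\neq[lk]}\mathfrak{R}\{(\mathbf{h}_{lk,i}\mathbf{x}_{ij}^{(t-1)})^{*}\mathbf{h}_{lk,i}\mathbf{x}_{ij}\}$, which is linear in $\{\mathbf{x}\}$ with $L(\mathbf{x}^{(t-1)})=c_{0}$. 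Cauchy--Schwarz applied to the augmented vectors $(\sigma,\{\mathbf{h}_{lk,i}\mathbf{x}_{ij}\})$ and $(\sigma,\{\mathbf{h}_{lk,i}\mathbf{x}_{ij}^{(t-1)}\})$ (indexed over $[ij]\neq[lk]$) gives $L(x)\leq \sqrt{c\,c_{0}}$, while AM--GM gives $c\,P_{0}+(c_{0}/P_{0})\,P^{2}\geq 2P\sqrt{c\,c_{0}}$. Chaining these and dividing by $PP_{0}$ yields $c/P\geq 2L/P_{0}-\zeta_{0}P/P_{0}$ with $\zeta_{0}=c_{0}/P_{0}=1/(1+\gamma_{0})$. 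Substituting this back into the $\sqrt{\cdot}$-tangent and multiplying by $Q^{-1}(\epsilon^{c})/\sqrt{n_{t}}$ produces a convex quadratic upper bound on $\delta_{lk}$ that is tight at $\mathbf{x}^{(t-1)}$.

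Adding the two bounds reproduces \eqref{eq=lem=q} termwise: the constants collapse into $a_{lk}$ using $1-\zeta_{0}=V_{0}/2$, the two linear pieces appear with the stated denominators $c_{0}$ and $P_{0}$, and the two convex quadratic penalties on $P(x)$ merge with coefficient $b_{lk}=\gamma_{0}+\zeta_{0}Q^{-1}(\epsilon^{c})/\sqrt{n_{t}V_{0}}$ (the $\gamma_{lk}^{(t)}$ in the stated $b_{lk}$ should read $\gamma_{lk}^{(t-1)}$). Concavity of the surrogate and tightness at $\mathbf{x}^{(t-1)}$ follow because each ingredient has those properties. I expect the main obstacle to be the convex upper bound on $V_{lk}$: the naive $\sqrt{\cdot}$-tangent alone is insufficient since $V_{lk}=2|a|^{2}/P$ is itself neither convex nor concave in $\{\mathbf{x}\}$, and the Cauchy--Schwarz + AM--GM chain is the only nonroutine step that supplies a tight convex quadratic envelope at the current iterate.
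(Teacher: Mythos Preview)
Your proposal is correct and follows essentially the same architecture as the paper's proof: a concave quadratic minorant for the Shannon term, the tangent line of $\sqrt{\cdot}$ for the dispersion term, and a concave lower bound on $\zeta_{lk}=c/P$, then collect terms. The only difference is presentational: the paper invokes two ready-made inequalities (its \eqref{eq-13} for $\ln(1+|x|^2/y)$ and its \eqref{eq-12-mo} for $|x|^2/y$), whereas you rederive the first via the WMMSE/quadratic-transform identity and the second via Cauchy--Schwarz plus AM--GM; these are exactly the standard proofs of those two inequalities, so the arguments coincide (and your observation that $\gamma_{lk}^{(t)}$ in $b_{lk}$ should read $\gamma_{lk}^{(t-1)}$ is correct).
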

\begin{proof}
Please refer to Appendix \ref{app=b}.
\end{proof}
Note that the concave lower-bound rates $\tilde{r}_{lk}$ for all $l,k$ are quadratic in $\mathbf{x}_{ij}$ for all $i,j$. 
Substituting $\tilde{r}_{lk}$s in $f_i$s gives the surrogate functions $\tilde{f}_i$s and consequently, the following surrogate optimization problem
\begin{subequations}\label{ar-opt-x-sur}
\begin{align}
 \underset{\{\mathbf{x}\}\in\mathcal{X}
 }{\max}\,\,  & 
  \tilde{f}_0\left(\left\{\mathbf{x}\right\}\right) &
 \,\, \text{s.t.}   \,\,& \tilde{f}_i\left(\left\{\mathbf{x}\right\}\right)\geq0,&\forall i,
\\
 &&&
\tilde{r}_{lk}\geq r_{lk}^{th},
&\forall l,k.
 \end{align}
\end{subequations}
This optimization problem is convex for spectral efficiency metrics, which can be efficiently solved by numerical tools.  
However, \eqref{ar-opt-x-sur} is not convex if we consider EE metrics. In this case, we can obtain the global optimal solution of \eqref{ar-opt-x-sur} by Dinkelbach-based algorithms \cite{zappone2015energy}. Note that the proposed framework converges to a stationary point of \eqref{ar-opt} since it falls into MM.
In the following subsections, we will discuss the solutions of \eqref{ar-opt-x-sur} with different utility functions. 
To this end, we select the minimum-weighted rate, weighted-sum rate, global EE, and minimum-weighted EE as utility functions since they are among the most important metrics in practice as well as in the literature either with Shannon rate or in FBL regimes \cite{ghanem2020resource, nasir2020resource, soleymani2020rate, nasir2021cell, xu2022rate, Sole1909:Energy, shehab2021effective, soleymani2021distributed, soleymani2022rate, ou2022energy}.
\subsection{Minimum-weighted rate maximization}
The maximization of the  minimum-weighted rate can be written as 
\begin{align}
\max_{\{\mathbf{x}\}\in\mathcal{X},r}&r&
\text{s.t.}\,\,&r_{lk}\geq \max\left(\lambda_{lk}r,r^{th}\right)&\forall l,k,
\end{align}
where $\lambda_{lk}$s are the weights corresponding to the priorities assigned to the users.  Employing the lower bounds in Lemma \ref{lem-1}, we have the following surrogate optimization problem 
\begin{align}\label{eq30}
\max_{\{\mathbf{x}\}\in\mathcal{X},r}&r&
\text{s.t.}\,\,&\tilde{r}_{lk}\geq \max\left(\lambda_{lk}r,r^{th}\right)&\forall l,k,
\end{align}
which is convex and can be efficiently solved. 
\subsection{Weighted-sum rate maximization}
The weighted-sum-rate maximization problem is 
\begin{align}\label{sum-opt}
\max_{\{\mathbf{x}\}\in\mathcal{X}}&\sum_{lk}\lambda_{lk}r_{lk}&
\text{s.t.}\,\,&r_{lk}\geq r^{th}_{lk}&\forall l,k,
\end{align}
where $r^{th}_{lk}$ is the minimum required rate for u$_{lk}$. 
The problem \eqref{sum-opt} is not convex, but can be solve by the proposed optimization framework. That is, we replace the rates by the surrogate functions in Lemma \ref{lem-1}, which yields the following convex problem
\begin{align}\label{sum-opt-2}
\max_{\{\mathbf{x}\}\in\mathcal{X}}&\sum_{lk}\lambda_{lk}\tilde{r}_{lk}&
\text{s.t.}\,\,&\tilde{r}_{lk}\geq r^{th}_{lk}&\forall l,k.
\end{align}

\subsection{Global energy efficiency maximization}
The GEE maximization problem can be written as
\begin{align}\label{gee-opt}
\max_{\{\mathbf{x}\}\in\mathcal{X}}&\frac{\sum_{lk}r_{lk}}{LKp_c+\eta\sum_{lk}\mathbf{x}_{lk}^H\mathbf{x}_{lk}}&
\text{s.t.}\,\,&r_{lk}\geq r^{th}_{lk}&\forall l,k.
\end{align}
Replacing the rates by the lower bounds in Lemma \ref{lem-1}, we have the following surrogate function
\begin{align}\label{gee-opt-sur}
\max_{\{\mathbf{x}\}\in\mathcal{X}}&\frac{\sum_{lk}\tilde{r}_{lk}}{LKp_c+\eta\sum_{lk}\mathbf{x}_{lk}^H\mathbf{x}_{lk}}&
\text{s.t.}\,\,&\tilde{r}_{lk}\geq r^{th}_{lk}&\forall l,k,
\end{align}
which is non-convex and falls into fractional-programming problems. 
We can obtain the global optimal solution of \eqref{gee-opt-sur} by the Dinkelbach algorithm since the numerator of the objective function is concave in $\mathbf{x}_{lk}$ while its denominator is convex in $\mathbf{x}_{lk}$ \cite{zappone2015energy}.
The global optimal solution of \eqref{gee-opt-sur} can be obtained by iteratively solving \cite{zappone2015energy}
\begin{subequations}\label{gee-opt-sur-2}
\begin{align}
\max_{\{\mathbf{x}\}\in\mathcal{X}}&
{\sum_{lk}\tilde{r}_{lk}}
-\mu^{(t,q)}
\left({LKp_c+\eta\sum_{lk}\mathbf{x}_{lk}^H\mathbf{x}_{lk}}
\right)\\
\text{s.t.}\,\,&\tilde{r}_{lk}\geq r^{th}_{lk}\hspace{1.5cm}\forall l,k,
\end{align}
\end{subequations}
and updating $\mu^{(t,q)}$ as
\begin{equation}
\mu^{(t,q)}=\frac{\sum_{lk}\tilde{r}_{lk}\left(\mathbf{x}_{lk}^{(t,q)}\right)}{LKp_c+\eta\sum_{lk}\mathbf{x}_{lk}^{(t,q)^H}\mathbf{x}_{lk}^{(t,q)}},
\end{equation}
where $\mathbf{x}_{lk}^{(t,q)}$ is the initial point at the $q$-th iteration of the Dinkelbach algorithm, which is the solution of the previous step.  
We refer the readers to \cite{zappone2015energy} for a detailed survey on fractional programming and the Dinkelbach algorithm. 
\subsection{ Minimum-weighted energy efficiency maximization}
The minimum-weighted EE maximization can be written as
\begin{subequations}
\begin{align}
\max_{\{\mathbf{x}\}\in\mathcal{X},e}&e&
\text{s.t.}\,\,&e_{lk}\geq e&\forall l,k,
\\
&&&{r}_{lk}\geq r^{th}_{lk}&\forall l,k.
\end{align}
\end{subequations}
Employing the optimization framework, we have the following surrogate optimization problem
\begin{subequations}\label{sur-ee-opt}
\begin{align}
\max_{\{\mathbf{x}\}\in\mathcal{X},e}&e&
\text{s.t.}\,\,&\tilde{e}_{lk}=\frac{\tilde{r}_{lk}}{p_c+\eta\mathbf{x}_{lk}^H\mathbf{x}_{lk}}\geq e&\forall l,k,
\\
&&&\tilde{r}_{lk}\geq r^{th}_{lk}&\forall l,k,
\end{align}
\end{subequations}
which is non-convex, but its global optimal solution can be obtained by the generalized Dinkelbach algorithm (GDA) since  $\tilde{e}_{lk}$ has a concave numerator and convex denominator. 
Applying GDA, the global optimal solution of \eqref{sur-ee-opt} can be obtained by iteratively solving 
\begin{subequations}\label{ee-opt-sur-2}
\begin{align}
\max_{\{\mathbf{x}\}\in\mathcal{X}}&
e\\
\text{s.t.}\,\,&
\tilde{r}_{lk}-
\mu^{(t,q)}
\left(
p_c+\eta\mathbf{x}_{lk}^H\mathbf{x}_{lk}
\right)
\geq \alpha_{lk}e&\forall l,k,
\\
&\tilde{r}_{lk}\geq r^{th}_{lk}&\forall l,k,
\end{align}
\end{subequations}
and updating $\mu^{(t,q)}$ as
$\mu^{(t,q)}=\min_{lk}\left\{\tilde{e}_{lk}\left(\mathbf{x}_{lk}^{(t,q)}\right)\right\}$, 
where $\mathbf{x}_{lk}^{(t,q)}$ is the initial point at the $q$-th iteration of the Dinkelbach algorithm, which is the solution of the previous step.  

\section{Extending the optimization framework to (STAR-)RIS-assisted systems}\label{seciv}
In this section, we extend the framework in Section \ref{sec-iii} to solve \eqref{ar-opt} by MM and alternating optimization (AO) for RIS-assisted systems. 
To this end, at the $t$-th iteration, we first fix the reflecting coefficients to $\{\bm{\Theta}^{(t-1)}\}$ and optimize over the beamforming vectors to obtain $\{\mathbf{x}^{(t)}\}$. 
We then fix  the beamforming vectors $\{\mathbf{x}^{(t)}\}$ and update the reflecting coefficients as $\{\bm{\Theta}^{(t)}\}$. We iterate the procedure until the solution converges. Note that our proposed framework converges since it produces a non-decreasing sequence in the objective function $f_0$. If the feasibility set $\mathcal{T}$ is convex, the framework falls into MM and converges to a stationary point of \eqref{ar-opt}. 
For the initial point of the schemes, we can 
employ the scheme in Appendix \ref{simp-sec}, similar to Section \ref{sec-iii}.
 In the following subsections, we provide the solutions for updating $\{\mathbf{x}\}$ and $\{\bm{\Theta}\}$.

\subsection{Optimizing the beamforming vectors}
In this subsection, we assume that the reflecting coefficients are fixed to $\{\bm{\Theta}^{(t-1)}\}$, and we optimize over $\{\mathbf{x}\}$, which results in 
\begin{subequations}\label{ar-opt-x-2}
\begin{align}%
 \underset{\{\mathbf{x}\}\in\mathcal{X}
 }{\max}  & 
  f_0\!\left(\!\left\{\mathbf{x}\right\}\!,\!\{\bm{\Theta}^{(t-1)}\}\!\right)\!\!\!\!\! &
  \text{s.t.}   \,&  f_i\!\left(\!\!\left\{\mathbf{x}\right\}\!,\!\{\bm{\Theta}^{(t-1)}\}\!\right)\!\!\geq0,\forall i,
  \\
 &&&
{r}_{lk}\geq r_{lk}^{th},
\,\,\forall l,k.
 \end{align}
\end{subequations}
This problem can be solved similar to the framework in  Section \ref{sec-iii}. 
Since it is straightforward to modify the framework in  Section \ref{sec-iii} to solve \eqref{ar-opt-x-2}, we do not repeat the solution here.
\subsection{Optimizing the reflecting coefficients}\label{sec-opt-ref}
In this subsection, we assume that the beamforming vectors are fixed to $\{\mathbf{x}^{(t)}\}$, and we optimize only over the reflecting coefficients $\{\bm{\Theta}\}$. 
In other words, we want to solve 
\begin{subequations}\label{ar-opt-t}
\begin{align}
 \underset{\{\bm{\Theta}\}\in\mathcal{T}
 }{\max}  & 
  f_0\!\left(\!\left\{\mathbf{x}^{(t)}\!\right\}\!,\{\bm{\Theta}\}\right)\!\!\!\!\!  &
  \text{s.t.}   \,&  f_i\left(\left\{\mathbf{x}^{(t)}\right\}\!,\{\bm{\Theta}\}\right)\!\geq\!0,\,\forall i,
  \\
 &&&
{r}_{lk}\geq r_{lk}^{th},
\,\,\forall l,k,
 \end{align}
\end{subequations}
 where $f_i$s are linear functions of the rates for both spectral and energy efficiency metrics since the beamforming vectors are fixed.
To solve \eqref{ar-opt-t}, we first obtain suitable concave lower bounds for the rates. We then mention how to transform $\mathcal{T}$ into a convex
problem if $\mathcal{T}$ is not a convex set. 
The rates have the same structure in the channels as in the beamforming vectors $\{\mathbf{x}\}$. Thus, we can employ a similar concave lower bound for the rates. To avoid notational confusions, we restate the lower bounds for the rates in the following corollary. 
\begin{corollary}\label{cor-1}
A concave lower bound for $r_{lk}$ is
\begin{multline}\label{eq42}
r_{lk}\geq \hat{r}_{lk}=
a_{lk}
+
\frac{2\mathfrak{R}
\left\{
\left(
\mathbf{h}_{lk,l}^{(t-1)}
\mathbf{x}_{lk}^{(t)}
\right)^*
\mathbf{h}_{lk,l}
\mathbf{x}_{lk}^{(t)}
\right\}
}{
\sigma^2
+
\sum_{[ij]\neq [lk]}\left|\mathbf{h}_{lk,i}^{(t-1)}
\mathbf{x}_{ij}^{(t)}\right|^2
}
\\
+\!
\frac{2Q^{-1}(\epsilon^c)}{\sqrt{n_tV_{lk}^{(t-1)}}}
\frac{
\sigma^2\!+\!\sum_{[ij]\neq [lk]}\mathfrak{R}\!\left\{\!\left(\mathbf{h}_{lk,i}^{(t-1)}
\mathbf{x}_{ij}^{(t)}\!\right)^*
\!\mathbf{h}_{lk,i}
\mathbf{x}_{ij}^{(t)}
\right\}
}{\sigma^2+\sum_{ij}\left|\mathbf{h}_{lk,i}^{(t-1)}
\mathbf{x}_{ij}^{(t)}\right|^2}
\\
-
b_{lk}
\frac{
\sigma^2
+
\sum_{ij}\left|\mathbf{h}_{lk,i}
\mathbf{x}_{ij}^{(t)}\right|^2
}{
\sigma^2
+
\sum_{ij}
\left|\mathbf{h}_{lk,i}^{(t-1)}
\mathbf{x}_{ij}^{(t)}\right|^2
},
\end{multline}
where the parameters are defined as in Lemma \ref{lem-1}.
\end{corollary}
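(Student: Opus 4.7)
The plan is to recognize that Corollary \ref{cor-1} is essentially a restatement of Lemma \ref{lem-1} with the roles of the optimization and fixed variables swapped: in Lemma \ref{lem-1} the beamforming vectors $\{\mathbf{x}\}$ vary while $\{\bm{\Theta}\}$ is held at the previous iterate, whereas in Corollary \ref{cor-1} the reflecting coefficients vary while $\{\mathbf{x}^{(t)}\}$ is fixed. The crucial observation is that from the channel model \eqref{eq-ch} the scalar $\mathbf{h}_{lk,i}\mathbf{x}_{ij}^{(t)} = \sum_m \mathbf{f}_{lk,m}\bm{\Theta}_m\mathbf{G}_{mi}\mathbf{x}_{ij}^{(t)} + \mathbf{d}_{lk,i}\mathbf{x}_{ij}^{(t)}$ is an affine function of the diagonal entries $\theta_{mn}$ of $\{\bm{\Theta}\}$, just as it is affine in $\mathbf{x}_{ij}$ when $\mathbf{h}_{lk,i}$ is fixed. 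Hence the bilinear structure on which the entire majorization in Lemma \ref{lem-1} rests is preserved verbatim.

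Given this observation, I would proceed by directly invoking the two majorization steps that the proof of Lemma \ref{lem-1} (in Appendix \ref{app=b}) establishes, applied to the affine parametrization in $\{\bm{\Theta}\}$. First, I would majorize the Shannon-rate term $\log(1+\gamma_{lk})$ by splitting it as $\log(\sigma^2+\sum_{ij}|\mathbf{h}_{lk,i}\mathbf{x}_{ij}^{(t)}|^2) - \log(\sigma^2+\sum_{[ij]\neq[lk]}|\mathbf{h}_{lk,i}\mathbf{x}_{ij}^{(t)}|^2)$, then lower-bound the numerator logarithm by its well-known quadratic MM surrogate (as in Lemma \ref{lem-1}) and linearize the denominator logarithm via a first-order expansion around $\{\bm{\Theta}^{(t-1)}\}$. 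Second, I would construct a convex upper bound for $\delta_{lk} = Q^{-1}(\epsilon^c)\sqrt{V_{lk}/n_t}$ by exploiting the concavity of the square root together with the expression \eqref{dis-} for the channel dispersion, again linearizing at $\{\bm{\Theta}^{(t-1)}\}$. Combining the two surrogates and collecting the constants exactly reproduces $a_{lk}$, $b_{lk}$, and $\zeta_{lk}^{(t-1)}$.

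The only bookkeeping that requires a check is concavity of the resulting surrogate in $\{\bm{\Theta}\}$. The first two lines of \eqref{eq42} are affine in $\{\bm{\Theta}\}$, since $\mathfrak{R}\{(\mathbf{h}_{lk,i}^{(t-1)}\mathbf{x}_{ij}^{(t)})^{*}\mathbf{h}_{lk,i}\mathbf{x}_{ij}^{(t)}\}$ is real-linear in $\mathbf{h}_{lk,i}$ and hence in $\{\bm{\Theta}\}$. The last line is a negatively weighted sum of squared affine functions of $\{\bm{\Theta}\}$, provided $b_{lk}\geq 0$, which is inherited from Lemma \ref{lem-1} together with the non-negativity of $\gamma_{lk}^{(t)}$ and $\zeta_{lk}^{(t-1)}$. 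Hence the surrogate is concave in $\{\bm{\Theta}\}$ and the three MM conditions (tightness at $\{\bm{\Theta}^{(t-1)}\}$, matching gradient, global lower bound) transfer directly from the corresponding properties verified in Appendix \ref{app=b}.

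The main obstacle is not conceptual but notational: one must make sure that in \eqref{eq42} the channels $\mathbf{h}_{lk,i}$ are interpreted as the optimization variables through their dependence on $\{\bm{\Theta}\}$, while the quantities carrying the superscript $(t-1)$ are evaluated at $\{\bm{\Theta}^{(t-1)}\}$ and those with superscript $(t)$ at the already updated beamforming iterate. Because no new inequality is needed beyond those proved for Lemma \ref{lem-1}, the corollary follows by simply invoking that proof with this alternative interpretation of the underlying affine map.
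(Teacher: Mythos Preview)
Your overall plan is correct and matches the paper exactly: the paper offers no separate proof of Corollary~\ref{cor-1} but simply observes (just before the corollary) that ``the rates have the same structure in the channels as in the beamforming vectors $\{\mathbf{x}\}$,'' so Appendix~\ref{app=b} carries over verbatim once one notes that $\mathbf{h}_{lk,i}\mathbf{x}_{ij}^{(t)}$ is affine in $\{\bm{\Theta}\}$.

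One small correction: your description of how the Shannon-rate term is handled does not match Appendix~\ref{app=b}. The paper does \emph{not} split $\log(1+\gamma_{lk})$ into a difference of logarithms and linearize the second one; it applies the single inequality \eqref{eq-13} directly with $x=\mathbf{h}_{lk,l}\mathbf{x}_{lk}$ and $y=\sigma^2+\sum_{[ij]\neq[lk]}|\mathbf{h}_{lk,i}\mathbf{x}_{ij}|^2$, which is what produces the specific constants $a_{lk}$, $b_{lk}$ and the particular quadratic structure of \eqref{eq42}. A log-difference plus first-order linearization would yield a different (still valid) surrogate, not the one stated. Since your stated intent is to ``directly invoke'' Appendix~\ref{app=b}, just drop the alternative description and the argument goes through.
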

Note that the concave lower bound in Corollary \ref{cor-1} is quadratic in $\{\bm{\Theta}\}$. Plugging the concave lower bound $\hat{r}_{lk}$ into \eqref{ar-opt-t}, we have the following surrogate optimization problem 
\begin{subequations}\label{ar-opt-t-sur}
\begin{align}
 \underset{\{\bm{\Theta}\}\in\mathcal{T}
 }{\max}  & 
  \hat{f}_0\left(\!\left\{\!\mathbf{x}^{(t)}\!\right\}\!,\!\{\bm{\Theta}\}\!\right)\!\! &
  \text{s.t.}   \,&  \hat{f}_i\left(\!\left\{\!\mathbf{x}^{(t)}\!\right\}\!,\{\bm{\Theta}\}\!\right)\!\geq0,\,\forall i,
  \\
 &&&
\hat{r}_{lk}\geq r_{lk}^{th},
\,\,\forall l,k.
 \end{align}
\end{subequations}
The optimization problem \eqref{ar-opt-t-sur} is convex if the feasibility set $\mathcal{T}$ is convex, i.e., when considering $\mathcal{T}_U$ for regular RIS and $\mathcal{T}_{SU}$ for STAR-RIS. In this case, the proposed framework converges to a stationary point of \eqref{ar-opt}.  
Unfortunately, the feasibility sets $\mathcal{T}_I$, $\mathcal{T}_C$, $\mathcal{T}_{SI}$, and $\mathcal{T}_{SN}$ are  not convex. 
To convexify them, we employ a suboptimal approach 
as described in the following. 
It should be emphasized that although the convergence of our proposed framework is guaranteed for all the RIS feasibility sets, we do not make any claim on the optimality of our framework for the non-convex RIS feasibility sets.
\subsubsection{Feasibility set $\mathcal{T}_I$} 
In this case, we have the non-convex constraint $|\theta_{mn}|=1$ for all $m,n$, which can be rewritten as
\begin{align}\label{eq-32}
|\theta_{mn}|^2&\leq 1,\\
|\theta_{mn}|^2&\geq 1,\label{eq-33}
\end{align}
for all $m,n$. The constraint \eqref{eq-32} is  convex. However, the constraint \eqref{eq-33} makes the problem \eqref{ar-opt-t-sur} non-convex since $|\theta_{mn}|^2$ is a convex function, rather than being concave. 
Thus, we employ convex-concave procedure (CCP) and approximate \eqref{eq-33} by a linear constraint as in \cite[Eq. (38)]{soleymani2022noma}.
We then also relax the constraint in \cite[Eq. (39)]{soleymani2022noma} by introducing $\epsilon>0$ as
\begin{equation}\label{+=+}
|\theta_{mn}|^2\!\geq\!|\theta_{mn}^{(t-1)}|^2\!-2\mathfrak{R}\{\theta_{mn}^{(t-1)^*}(\theta_{mn}-\theta_{mn}^{(t-1)})\}\!\geq 1-\epsilon,
\end{equation}
for all $m,n$. 
Plugging \eqref{eq-32} and \eqref{+=+} into \eqref{ar-opt-t-sur}, we have the following convex optimization problem
 \begin{subequations}\label{ar-opt-2-rf-2-i}
\begin{align}
 \underset{\{\bm{\Theta}\}
 }{\max}  & 
  \hat{f}_0\left(\!\left\{\!\mathbf{x}^{(t)}\!\right\}\!,\{\bm{\Theta}\}\!\right)\!\!\! &
  \text{s.t.}   \,\,&  \hat{f}_i\left(\!\left\{\mathbf{x}^{(t)}\!\right\}\!,\{\bm{\Theta}\}\!\right)\!\geq\!0,\,\forall i, 
\\
 &&&
\hat{r}_{lk}\geq r_{lk}^{th},
\,\,\forall l,k,
    \\
    &&&\eqref{+=+},\eqref{eq-32}\,\,\forall m,n.
 \end{align}
\end{subequations}
Since \eqref{ar-opt-2-rf-2-i} is convex, it can be solved efficiently by numerical tools. Let us call the solution of \eqref{ar-opt-2-rf-2-i} as $\{\hat{\bm{\Theta}}\}$. Although the relaxation in \eqref{+=+} makes the convergence of our framework faster, it may make  $\{\hat{\bm{\Theta}}\}$ infeasible.
 To obtain a feasible point, we project  $\{\hat{\bm{\Theta}}\}$ to $\mathcal{T}_I$ by normalizing $\{\hat{\bm{\Theta}}\}$ as $\{\hat{\bm{\Theta}}^{\text{new}}\}$, i.e.,
\begin{equation}\label{eq==27}
\hat{\theta}_{mn}^{\text{new}}=\frac{\hat{\theta}_{mn}}{|\hat{\theta}_{mn}|}, \hspace{1cm}\forall m,n.
\end{equation}
 To ensure the convergence of our scheme, we update $\{\bm{\Theta}\}$ as
\begin{equation}\label{eq-42}
\{\bm{\Theta}^{(t)}\}=
\left\{
\begin{array}{lcl}
\{\hat{\bm{\Theta}}^{\text{new}}\}&\text{if}&
f\left(\left\{\mathbf{P}^{(t)}\right\},\{\hat{\bm{\Theta}}^{\text{new}}\}\right)\geq
\\
&&
f\left(\left\{\mathbf{P}^{(t)}\right\},\{\bm{\Theta}^{(t-1)}\}\right)
\\
\{\bm{\Theta}^{(t-1)}\}&&\text{Otherwise}.
\end{array}
\right.
\end{equation}
This updating rule guarantees convergence by generating a sequence of non-decreasing objective functions.

\subsubsection{Feasibility set $\mathcal{T}_C$}
To convexifying  $\mathcal{T}_{C}$, we first relax the relationship between the phase and amplitude of reflecting components. In other words, we consider them as independent optimization parameters. This relaxation yields \eqref{eq-32} and
\begin{align}
 |\theta_{mn}|^2&\geq|\theta|_{\min}^2,
\label{eq=9-20}
\end{align}
for all $m,n$. 
Now, the problem is similar to convexifying the feasibility set $\mathcal{T}_C$.
That is, we employ CCP to find a suitable linear lower bound for $|\theta_{mn}|^2$ as
\begin{equation}\label{eq-50-2}
|\theta_{mn}^{(t-1)}|^2+2\mathfrak{R}\left(\theta_{mn}^{(t-1)}(\theta_{mn}-\theta_{mn}^{(t-1)})^*\right)\geq |\theta|_{\min}^2,
\end{equation}
which results in the following convex surrogate optimization problem
 \begin{subequations}\label{ar-opt-2-rf-2-c}
\begin{align}
 \underset{\{\bm{\Theta}\}
 }{\max}  & 
  \hat{f}_0\left(\!\left\{\mathbf{x}^{(t)}\!\right\}\!,\{\bm{\Theta}\}\!\right)\!\!\! &
  \text{s.t.}   \,&  \hat{f}_i\left(\!\left\{\mathbf{x}^{(t)}\!\right\}\!,\!\{\bm{\Theta}\}\!\right)\!\geq\!0,\,\forall i, 
\\
 &&&
\hat{r}_{lk}\geq r_{lk}^{th},
\,\,\forall l,k,
    \\
    &&&\eqref{eq-50-2},\eqref{eq-32}\,\,\forall m,n.
 \end{align}
\end{subequations}
Due to the relaxation of the dependency of the amplitude and phase of reflecting components, the solution of \eqref{ar-opt-2-rf-2-c}, i.e., $\{\bm{\Theta}^{{(\star)}}\}$ may be infeasible. To generate a feasible solution, we project $\{\bm{\Theta}^{{(\star)}}\}$ into $\mathcal{T}_{C}$ as 
\begin{equation}
\{\hat{\bm{\Theta}}^{\text{new}}\}=\mathcal{F}(\angle\{\bm{\Theta}^{{(\star)}}\}),
\end{equation}
where $\mathcal{F}$ is defined as in \eqref{eq*=*},
and update $\{\bm{\Theta}\}$ according to \eqref{eq-42}, which guarantees the convergence  of the algorithm. 

\subsubsection{STAR-RIS with mode switching and feasibility set $\mathcal{T}_{SI}$ or $\mathcal{T}_{SN}$}
In our proposed mode switching (MS) approach, we randomly divide the reflecting components into two sets: reflecting set and transmitting set. This scheme converts each STAR-RIS into two regular RISs. Thus, the MS scheme can be obtained similar to considering two RISs with the feasibility set $\mathcal{T}_{I}$, instead of each STAR-RIS. 

\subsubsection{STAR-RIS with time switching and feasibility set $\mathcal{T}_{SI}$ or $\mathcal{T}_{SN}$}
In our proposed time switching (TS) approach, we divide each time slot into two sub-slots. 
In the first sub-slot, all the RIS components operate in the reflection mode, while in the second sub-slot, they all operate in the transmission mode.
In both sub-slots, each STAR-RIS operates similar to a regular RIS with feasibility set $\mathcal{T}_{I}$. Thus, we can employ the proposed algorithm for regular RIS with $\mathcal{T}_{I}$ to update the RIS components in each sub-slot. 

\subsubsection{Feasibility set $\mathcal{T}_{SI}$ (STAR-RIS with energy splitting)} 
In the energy splitting (ES) approach, each RIS component can simultaneously reflect and transmit, which makes it impossible to model STAR-RIS as a set of regular RISs. Thus, we have to directly tackle 
the constraint $|\theta_{mn}^t|^2+|\theta_{mn}^r|^2=1$, which can be rewritten as
\begin{align}\label{eq33}
|\theta_{mn}^t|^2+|\theta_{mn}^r|^2&\leq 1,\\
|\theta_{mn}^t|^2+|\theta_{mn}^r|^2&\geq 1.
\label{eq34}
\end{align}
The former constraint is convex, but the latter is not  since $|\theta_{mn}^t|^2$ and $|\theta_{mn}^r|^2$ are convex functions. Thus, we can employ CCP to approximate \eqref{eq34} by a linear constraint similar to the feasibility set $\mathcal{T}_{I}$. 
To make the convergence faster, we also relax \eqref{eq34} by introducing a positive variable $\epsilon$ as
\begin{multline}\label{eq-50-6}
|\theta_{mn}^{r^{(t-1)}}|^2+2\mathfrak{R}\left(\theta_{mn}^{r^{(t-1)}}(\theta_{mn}^r-\theta_{mn}^{r^{(t-1)}})^*\right)
+
|\theta_{mn}^{t^{(t-1)}}|^2
\\
+2\mathfrak{R}\left(\theta_{mn}^{t^{(t-1)}}(\theta_{mn}^t-\theta_{mn}^{t^{(t-1)}})^*\right)\geq 1-\epsilon.
\end{multline}
Substituting \eqref{eq33} and \eqref{eq-50-6} in \eqref{ar-opt-t-sur}, we have the following convex surrogate optimization problem
\begin{subequations}\label{ar-opt-2-rf-2-c=st}
\begin{align}
 \underset{\{\bm{\Theta}\}
 }{\max}  & 
  \hat{f}_0\left(\!\left\{\!\mathbf{x}^{(t)}\right\}\!,\{\bm{\Theta}\}\!\right)\!\!\! &
  \text{s.t.}   \,&  \hat{f}_i\left(\!\left\{\!\mathbf{x}^{(t)}\right\}\!,\{\bm{\Theta}\}\!\right)\!\geq\!0,\,\forall i, 
\\
 &&&
\hat{r}_{lk}\geq r_{lk}^{th},
\,\,\forall l,k,
    \\
    &&&\eqref{eq33},\eqref{eq-50-6}\,\,\forall m,n.
 \end{align}
\end{subequations}
Due to the relaxation in \eqref{eq-50-6}, the solution of \eqref{eq-50-6}, i.e., $\bm{\Theta}_m^{r^{(\star)}}$ and $\bm{\Theta}_m^{t^{(\star)}}$, may not be feasible. 
Thus, we first project $\bm{\Theta}_m^{r^{(\star)}}$ and $\bm{\Theta}_m^{t^{(\star)}}$ into $\mathcal{T}_{SI}$ as
\begin{align}\label{eq3300}
\hat{\theta}_{mn}^t&\!=\!\frac
{{\theta}_{mn}^{t^{(\star)}}}
{\sqrt{|{\theta}_{mn}^{t^{(\star)}}|^2+|{\theta}_{mn}^{r^{(\star)}}|^2}},&\!\!
\hat{\theta}_{mn}^r&\!=\!\frac
{{\theta}_{mn}^{r^{(\star)}}}
{\sqrt{|{\theta}_{mn}^{t^{(\star)}}|^2+|{\theta}_{mn}^{r^{(\star)}}|^2}},
\end{align}
for all $m,n$. 
Finally, we update $\bm{\Theta}_m^r$ and $\bm{\Theta}_m^t$ as 
\begin{equation}\label{eq-42-star}
\{\bm{\Theta}^{r^{(t)}}\!\!,\bm{\Theta}^{t^{(t)}}\!\}\!=\!\!
\left\{\!\!\!\!
\begin{array}{lcl}
\{\hat{\bm{\Theta}}^{r},\hat{\bm{\Theta}}^{t}\}\!\!&\!\!\text{if}\!\!&
f\!\!\left(\!\!\left\{\mathbf{P}^{(t)}\right\}\!,\{\hat{\bm{\Theta}}^{r},\hat{\bm{\Theta}}^{t}\}\!\right)\!\!\geq\!\!\!\!\!
\\
&&
f\!\left(\left\{\mathbf{P}^{(t)}\right\},\{\bm{\Theta}^{(t-1)}\}\right)
\\
\{\bm{\Theta}^{(t-1)}\}&&\text{Otherwise},
\end{array}
\right.
\end{equation}
where $\{\bm{\Theta}^{(t-1)}\}=\{\bm{\Theta}^{r^{(t-1)}}\!\!,\bm{\Theta}^{t^{(t-1)}}\}$.
This updating rule ensures the convergence.

\subsubsection{Feasibility set $\mathcal{T}_{SN}$ (STAR-RIS with energy splitting)}
The feasibility set $\mathcal{T}_{SN}$ is a subset of $\mathcal{T}_{SI}$. In other words, $\mathcal{T}_{SN}$ includes the two convex constraints in \eqref{st-eq-} and \eqref{st-eq-1}, in addition to the constraint $|\theta_{mn}^t|^2+|\theta_{mn}^r|^2=1$.
Since \eqref{st-eq-} and \eqref{st-eq-1} are convex constraints, we should handle only the non-convex constraint $|\theta_{mn}^t|^2+|\theta_{mn}^r|^2=1$, which can be done similar to our algorithm for the feasibility set $\mathcal{T}_{SI}$. 
That is, we have to solve the following convex surrogate problem:
\begin{subequations}\label{ar-opt-2-rf-2-c=st}
\begin{align}
 \underset{\{\bm{\Theta}\}
 }{\max}  & 
  \hat{f}_0\left(\!\left\{\!\mathbf{x}^{(t)}\right\}\!,\{\bm{\Theta}\}\!\right)\!\!\!\! &
  \text{s.t.}   \,&  \hat{f}_i\left(\!\left\{\!\mathbf{x}^{(t)}\right\}\!,\{\bm{\Theta}\}\!\right)\!\geq\!0,\,\forall i, 
\\
 &&&
\hat{r}_{lk}\geq r_{lk}^{th},
\,\,\forall l,k,
    \\
    &&& \eqref{st-eq-},\!\eqref{st-eq-2},\eqref{eq33},\eqref{eq-50-6}\,\forall m,\!n.
 \end{align}
\end{subequations}
The solution of \eqref{ar-opt-2-rf-2-c=st} might be infeasible because of the relaxation in \eqref{eq-50-6}. Thus, we normalize the solution according to \eqref{eq3300} and update $\bm{\Theta}_m^r$ and $\bm{\Theta}_m^t$ according to the rule in  \eqref{eq-42-star} to ensure the convergence.
\begin{table}[htb]
\label{alg-wsrm}
\footnotesize
\begin{tabular}{l}
\hline 
{\textbf{Algorithm I} Our algorithm for MWRM with STAR-RIS and $\mathcal{T}_{SN}$.}\\
\hline 
\hspace{0.2cm}
\textbf{Initialization}\\
\hspace{0.2cm}
{Set $\epsilon$, 
$t=1$, 
 $\{\mathbf{x}\}=\{\mathbf{x}^{(0)}\}$, and$\{\bm{\Theta}\}=\{\bm{\Theta}^{(0)}\}$ }\\
\hline 
\hspace{0.2cm}
{\textbf{While} $\left(\underset{ lk}{\min}\,\frac{r^{(t)}_{lk}}{\lambda_{lk}}-\underset{ lk}{\min}\,\frac{r^{(t-1)}_{lk}}{\lambda_{lk}}\right)/\underset{\forall l,k}{\min}\,\frac{r^{(t-1)}_{lk}}{\lambda_{lk}}\geq\epsilon$ }\\ 
\hspace{.6cm}
{{\bf Optimizing over} $\{\mathbf{P}\}$ {\bf by fixing} $\{\bm{\Theta}^{(t-1)}\}$}\\
\hspace{1.2cm}
{Obtain $\tilde{r}_{lk}^{(t-1)}$ 
based on \eqref{eq=lem=q} in Lemma \ref{lem-1}}\\
\hspace{1.2cm}
{Compute $\{\mathbf{x}^{(t)}\}$ by solving \eqref{eq30}}\\
\hspace{.6cm}
{{\bf Optimizing over} $\{\bm{\Theta}\}$ {\bf by fixing} $\{\mathbf{P}^{(t-1)}\}$}\\
\hspace{1.2cm}
{Obtain $\hat{r}_{lk}^{(t-1)}$ 
based on \eqref{eq42} in Corollary \ref{cor-1}}\\
\hspace{1.2cm}
{Compute $\bm{\Theta}_m^{r^{(\star)}}$ and $\bm{\Theta}_m^{t^{(\star)}}$ by solving \eqref{ar-opt-2-rf-2-c=st}}\\
\hspace{1.2cm} Update $\{\bm{\Theta}^{(t)}\}$ based on the rule in \eqref{eq-42-star}\\
\hspace{.6cm}
{$t=t+1$}\\
\hspace{0.2cm}
{\textbf{End (While)}}\\
\hspace{0.2cm}
{{\bf Return} $\{\mathbf{P}^{\star}\}$ and $\{\bm{\Theta}^{\star}\}$.}\\
\hline 
\end{tabular}
\normalsize
\end{table}

\subsection{Discussions on different STAR-RIS modes}
It can be expected that the  ES approach outperforms the MS and/or TS approaches since it includes the MS and/or TS approaches as special cases. In the ES approach, each RIS component can simultaneously reflect and transmit signals, while in the MS and/or TS approaches, each RIS component either transmits or reflects at a time.  
In other words, the solutions of the MS and/or TS schemes are feasible, but possibly suboptimal for the ES scheme. Indeed, in the ES scheme, it may happen that a set of RIS components work only in the transmission mode while the remaining components operate in the reflection mode, which is the same as in the MS approach.  Additionally, if we allow time slot sharing as it is the case in the TS approach, it may happen that in the ES approach, all RIS components operate in the transmission mode in the first time slot, while they all operate only in the reflection mode in the next time slot, which is the same as in the TS approach.     
As a result, an optimal ES approach never performs worse than any MS/TS scheme. 

We can also compare ES, MS and TS based on the coverage of STAR-RIS. The ES and MS can simultaneously provide a $360^\circ$ coverage. However, the TS approach can cover only a subspace (either reflection or transmission spaces), which may restrict the practicality of the TS mode especially in URLLC systems in which each user constantly requires an ultra-reliable communication with a very low latency.  Indeed, since the STAR-RIS with the TS mode can cover only a subset of users in each sub-slot, some users may not receive any signal from the STAR-RIS at a time slot. This issue is more critical when the users are close to the cell edge, which means that they may have a very weak link, and they would be in outage without the assistance of the STAR-RIS. In such scenarios and in the presence of a very stringent latency constraint, the TS approach may not be a feasible option. However, if the latency constraint is more relaxed and/or the users are not in outage without the assistance of STAR-RIS, the TS approach can be still beneficial.  Another drawback of the TS mode is that we have to solve the corresponding optimization problems twice for the coherence time of the channels, which results in  inefficient TS mode in fast fading systems. 

\subsection{Discussions on computational complexities}
In this subsection, we provide a discussion on the computational complexity of our proposed algorithms.  
Note that our schemes are iterative, and their actual computational complexities may highly depend on the implementation of the algorithms.
Here, we provide an approximated upper bound for the number of multiplications to obtain a solution for our proposed schemes.

The proposed schemes are iterative, and each iteration consist of two steps. In the first step, we obtain beamforming vectors $\{{\bf x}^{(t)}\}$, and in the second step, we compute the (STAR-)RIS coefficients $\{\bm{\Theta}^{(t)}\}$. In the following, we compute an upper bound for the number of multiplications to obtain a solution for the MWRM problem with the feasibility set $\mathcal{T}_{SU}$, which considers STAR-RIS. Since it would be very straightforward to extend the analysis to other optimization problems, we do not provide such analysis for all the considered optimization problems. 
To update the beamforming vectors in one iteration of the MWRM problem, we have to solve the convex optimization problem in \eqref{eq30}. To solve it, the total number of the Newton steps grows with the square root of the number of inequality constraints in the problem \cite[Chapter 11]{boyd2004convex}, which is equal to $LK$ for the optimization problem in \eqref{eq30}. Moreover, for each Newton step, we have to compute the surrogate functions for the rates. The number of multiplications to compute each surrogate rate function grows with $LKN_{BS}$. Therefore, the computational complexity for  solving \eqref{eq30} can be approximated as $\mathcal{O}\left(N_{BS}L^2K^2\sqrt{KL}\right)$. 
Additionally, to update the (STAR-)RIS coefficients for the MWRM, we have to solve the convex optimization problem in \eqref{ar-opt-t-sur}.
Similarly, it can be shown that the computational complexity for  solving \eqref{ar-opt-t-sur} can be approximated as $\mathcal{O}\left(N_{BS}L^2K^2\sqrt{KL+MN_{RIS}}\right)$.
Finally, setting the maximum number of the iterations to $N$, the computational complexity of our proposed scheme for the MWRM problem with the feasibility set $\mathcal{T}_{SU}$ can be approximated as 
$\mathcal{O}\left(N_{BS}L^2K^2\left(\sqrt{KL+MN_{RIS}}+\sqrt{KL}\right)\right)$.

\section{Numerical results}\label{num-sec}
\begin{figure}[t!]
    \centering
\includegraphics[width=.45\textwidth]{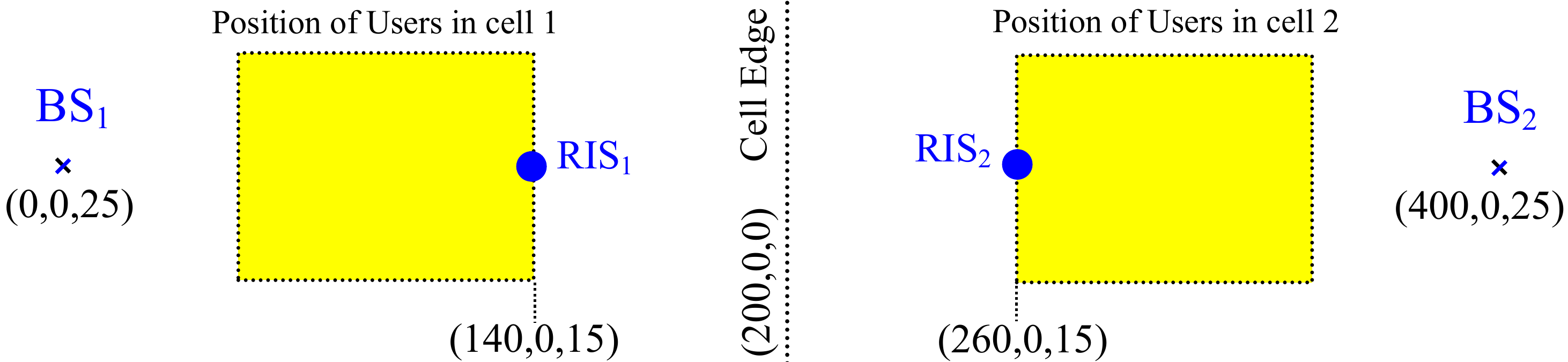}
     \caption{System topology in simulations.}
	\label{Fig-sim-model}
\end{figure}
In this section, we present some numerical results for the considered optimization problems. 
We consider a two-cell BC with $K$ users in each cell as shown in Fig. \ref{Fig-sim-model}, unless it is mentioned otherwise. 
We consider one RIS in each cell since it has been shown that a distributed implementation of RIS can outperform a collocated implementation \cite{soleymani2022improper}. 
The heights of BSs, RISs and users are, respectively, $25$, $15$, and $1.5$ meters.
The BSs are located at $(0,0,25)$ and $(400,0,25)$ while RISs  are located close to the users at $(140,0,15)$ and $(260,0,15)$. 
The $K$ users,  in each cell, are located in a square with a side $20$ meters exactly in front of the RIS.  
We represent the power budget of BSs with $P$. 
We assume that the channels through RISs are line of sight (LoS), while the direct channels are non-LoS (NLoS). It means that the links through RISs follow the Rician fading, but the direct links experience a Rayleigh fading. The propagation parameters are chosen as in \cite{soleymani2022improper}.

As indicated, to the best of our knowledge, there is no other work that considers EE metrics and/or STAR-RIS in RIS-assisted URLLC systems with FBL. 
 Thus, we consider the following schemes in the simulations:
\begin{itemize}
\item 
{\bf S-RIS}  refers to the Shannon rate in RIS-assisted systems with the feasibility set $\mathcal{T}_I$.

\item 
{\bf RIS} refers to the proposed scheme for RIS-assisted systems with the feasibility set $\mathcal{T}_I$.

\item 
{\bf RIS$_U$} (or {\bf RIS$_C$}) refers to the proposed scheme for RIS-assisted systems with the feasibility set $\mathcal{T}_U$ (or $\mathcal{T}_C$).

\item 
{\bf N} refers to the scheme without RIS.

\item 
{\bf R-RIS} refers to the proposed scheme for RIS-assisted systems with random reflecting coefficients.

\item
 {\bf ST-RIS$_{EX}$} refers to the proposed scheme for STAR-RIS-assisted systems with energy splitting and the feasibility set $\mathcal{T}_{SX}$, where $X$ can be $U$, $I$ and $N$ for modeling the feasibility set $\mathcal{T}_{SU}$, $\mathcal{T}_{SI}$, and $\mathcal{T}_{SN}$, respectively.

\item
 {\bf ST-RIS$_{M}$} (or {\bf ST-RIS$_{T}$}) refers to the proposed scheme for STAR-RIS-assisted systems with mode (or time) switching and the feasibility set $\mathcal{T}_{SI}$.
\end{itemize}

In the following, we consider the maximization of the minimum rate, sum rate, global EE and minimum EE of users  in separate subsections. 
Through numerical examples, we investigate the impact of various parameters on the performance of RIS and/or STAR-RIS. These parameters are the power budget of BSs, number of BS antennas, number of users per cell, packet length, and decoding error probability. 
We discuss how these parameters may affect on the FBL rate as well as on the gap between the FBL rate and the Shannon rate. 
\subsection{Minimum-weighted rate maximization}\label{sec5a}
In this subsection, we provide some numerical results for maximizing the minimum rate by considering the effect of different parameters. We call the minimum rate of users as the fairness rate since all users mostly achieve the same rate if we maximize the minimum rate.

\subsubsection{Impact of power budget} 

\begin{figure}
    \centering
\begin{subfigure}{0.45\textwidth}
        \centering
\includegraphics[width=\textwidth]{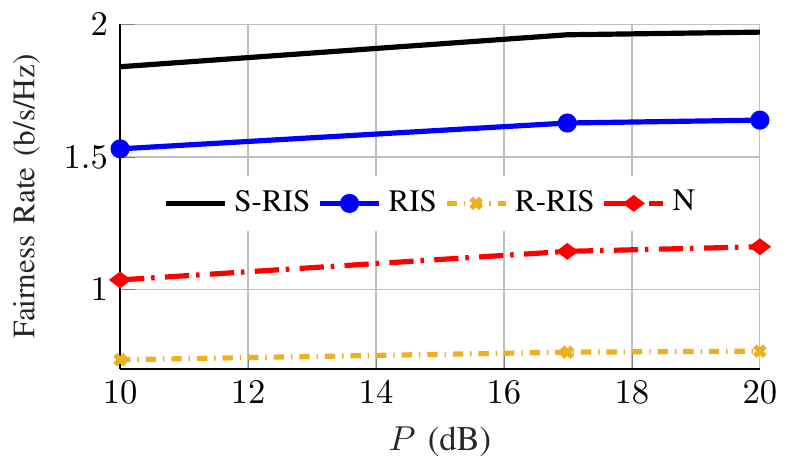}
        \caption{$N_{BS}=4$.}
    \end{subfigure}%
    \\
    \begin{subfigure}{0.45\textwidth}
        \centering
\includegraphics[width=\textwidth]{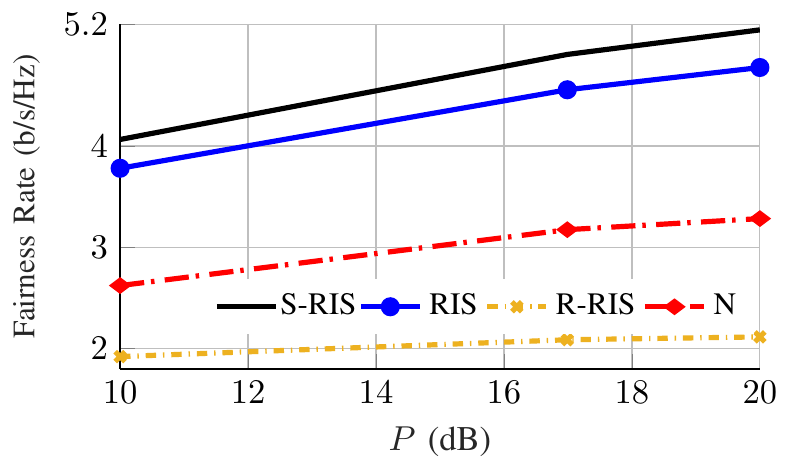}
        \caption{$N_{BS}=8$.}
    \end{subfigure}
    \caption{The average fairness rate versus $P$ for $N_{RIS}=20$, $L=2$, $K=4$, $M=2$, $n_t=256$ bits, and $\epsilon^c=10^{-5}$.}
	\label{Fig-rr}  
\end{figure}
Fig. \ref{Fig-rr} shows the average fairness rate versus the power budget of BSs for $N_{RIS}=20$, $L=2$, $K=4$, $M=2$, $n_t=256$, $\epsilon^c=10^{-5}$, and different number of BS antennas.
As can be observed, RIS can significantly increase the average fairness rate for the considered $N_{BS}$s if the reflecting coefficients are optimized properly. Interestingly, we can observe that RIS performs worse than the systems without RIS when the reflecting coefficients are chosen randomly.   
 This indicates the importance of optimizing RIS components. 
 Additionally, the performance gap between the achievable rate with FBL and the Shannon rate decreases with $N_{BS}$ when the other parameters are fixed. It happens since the SINR may be improved by increasing the number of transmit antennas. It means that the performance gap is expected to be higher when we operate in [highly] overloaded systems, i.e., when the number of users per cell is equal to or higher than the number of BS antennas.

\subsubsection{Impact of transmit antennas}
\begin{figure}[t!]
    \centering
\includegraphics[width=.45\textwidth]{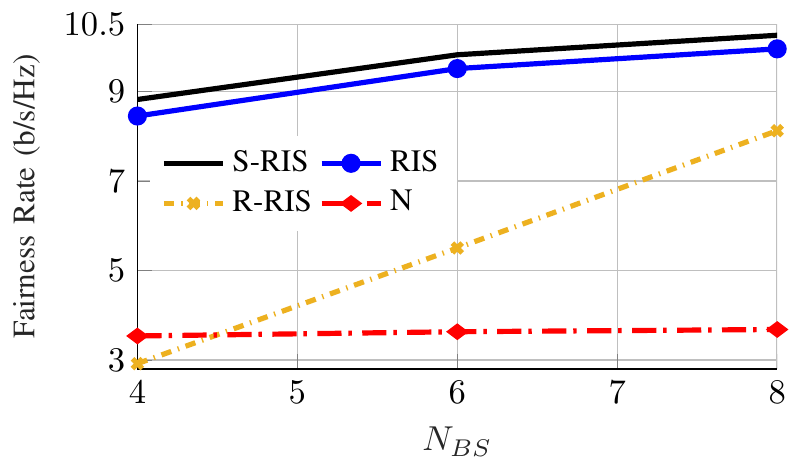}
    \caption{The average fairness rate versus $N_{BS}$ for $P=20$dB, $N_{RIS}=20$, $L=2$, $K=2$, $M=2$, $n_t=200$, and $\epsilon^c=0.001$.}
	\label{Fig-rr6} 
\end{figure}
Fig. \ref{Fig-rr6} shows the average fairness rate versus $N_{BS}$ for $P=20$dB, $N_{RIS}=20$, $L=2$, $K=2$, $M=2$, $n_t=200$, and $\epsilon^c=0.001$. As expected, the average fairness rate increases with the number of BS antennas. 
We can also observe that RIS can considerably increase the average minimum rate of users, and the benefits of RIS slightly increase with $N_{BS}$. 
Interestingly, we observe that the benefits of optimizing the reflecting coefficients decrease with $N_{BS}$ even though they are still significant. The reason is that, the effective channel gains may be more dependent on the reflecting coefficient when $N_{BS}$ is low. 
Furthermore, we observe that the gap between the Shannon rate and the FBL rate slightly decrease with $N_{BS}$, which is in line with the results in Fig. \ref{Fig-rr}. 
Moreover, we observe that the performance gap between the Shannon rate and the FBL rate is much lower than when $\epsilon=10^{-5}$ (see Fig. \ref{Fig-rr}), which implies that we should expect higher performance loss comparing to the Shannon if we require ultra-reliable communication.

\subsubsection{Impact of number of users per cell}
\begin{figure}[t!]
    \centering
\includegraphics[width=.4\textwidth]{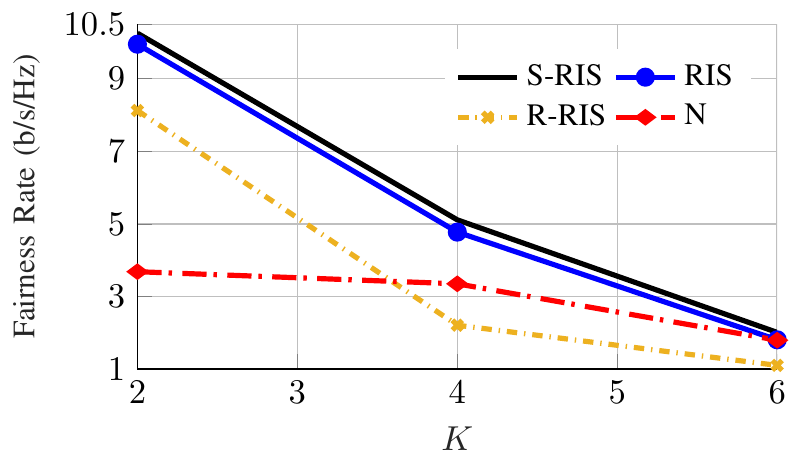}
    \caption{The average fairness rate versus $K$ for $P=10$dB, $N_{RIS}=20$, $N_{BS}=8$, $L=2$, $K=2$, $M=2$, $n_t=200$, and $\epsilon^c=0.001$.}
	\label{Fig-rr-6} 
\end{figure}
Fig. \ref{Fig-rr-6} shows the average fairness rate versus $K$ for $P=20$dB, $N_{RIS}=20$, $L=2$, $N_{BS}=8$, $M=2$, $n_t=200$, and $\epsilon^c=0.001$.
As can be observed, the fairness rate significantly decreases with $K$. 
 RIS can improve the system performance considerably when $K\leq 4$. However, the benefits of RIS decrease with $K$ and become almost negligible for $K=6$ in this particular example. 
The reason is that, the number of RIS components per user decreases when there are more users in the system. Thus, we have to increase the number of RIS components to compensate for the increment of the number of users. 
Furthermore, we observe that optimizing over the reflecting coefficients is more important when the number of users increases. Indeed, RIS with random reflecting coefficients may even perform much worse than the systems without RIS when $K$ grows.  
Finally, we also observe that the relative mismatch between the Shannon rate and the FBL rate increases with $K$. The mismatch is around $3\%$ for $K=2$, $7\%$ for $K=4$, and $12\%$ for $K=6$.
As indicated before, it happens since the effective SINR decreases with $K$, which in turn yields the further decrements in the FBL rate. 
Note that the gap increase if we reduce $n_t$ and/or $\epsilon^c$ as discussed in the following.

\subsubsection{Impact of packet lengths}
\begin{figure}
    \centering
\includegraphics[width=.4\textwidth]{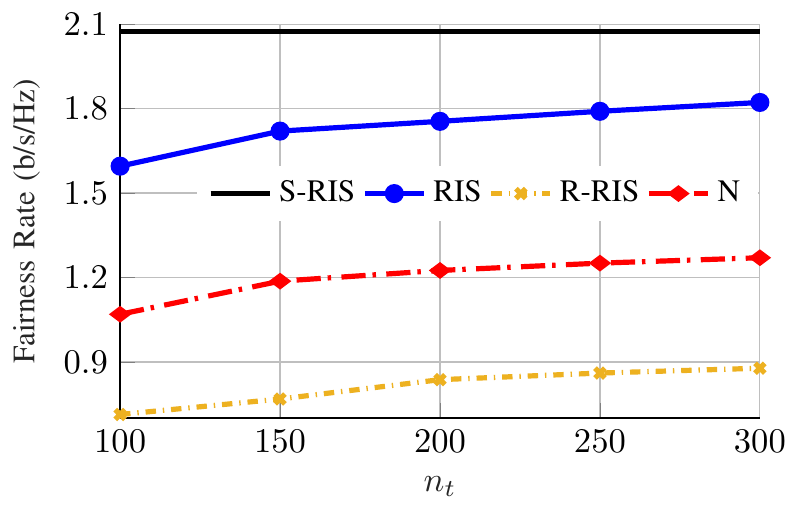}
        \caption{The average fairness rate versus $n_t$ for $\epsilon=10^{-3}$, $N_{RIS}=20$, $L=2$, $K=4$, $M=2$, $P=100$, and $N_{BS}=4$.}
	\label{Fig-rr7}  
\end{figure}
Fig. \ref{Fig-rr7} shows the average fairness rate versus $n_t$ for $N_{BS}=4$, $N_{RIS}=20$, $L=2$, $K=4$, $M=2$, $P=100$, and $\epsilon^c=0.001$. 
As can be observed, by increasing the packet length,  
the gap between the Shannon rate and achievable rate by \eqref{rate} decreases.  
Of course, $n_t$ is not the only important parameter, and there are other effective parameters such as  $\epsilon^c$ and SINR that can have a high impact on the performance and accuracy of the normal approximation in \eqref{rate} as discussed before.

\subsubsection{Impact of $\epsilon^c$}
\begin{figure}[t!]
    \centering
\includegraphics[width=.45\textwidth]{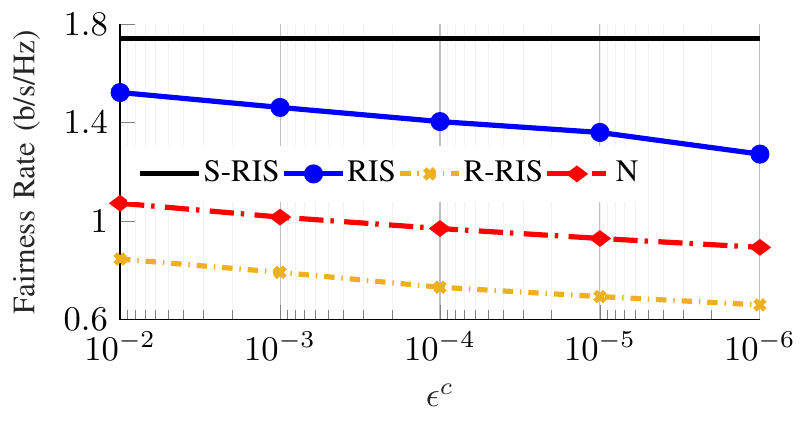}
        \caption{The average fairness rate versus $\epsilon$ for $n_t=200$, $N_{RIS}=20$, $L=2$, $K=4$, $M=2$, $P=10$, and $N_{BS}=4$.}
	\label{Fig-rr8}  
\end{figure}
Fig. \ref{Fig-rr8} shows the average fairness rate versus $\epsilon^c$ for $n_t=200$, $N_{RIS}=20$, $L=2$, $K=4$, $M=2$, and $N_{BS}=4$. 
As can be observed, the average rate increases with $\epsilon^c$. In other words, the lower the decoding error probability is, the lower the average rate is. 
Thus, if we work on ultra-reliable regime with FBL, we have to tolerate some performance loss in the data rate. The more reliable the communication is, the less data rate we can achieve. 
We can also observe in Fig. \ref{Fig-rr8} that RIS can significantly increase the average fairness rate of the system and consequently, for a given date rate, it can highly improve the reliability of the communication link.

\subsubsection{STAR-RIS}
\begin{figure}[t!]
    \centering
\includegraphics[width=.45\textwidth]{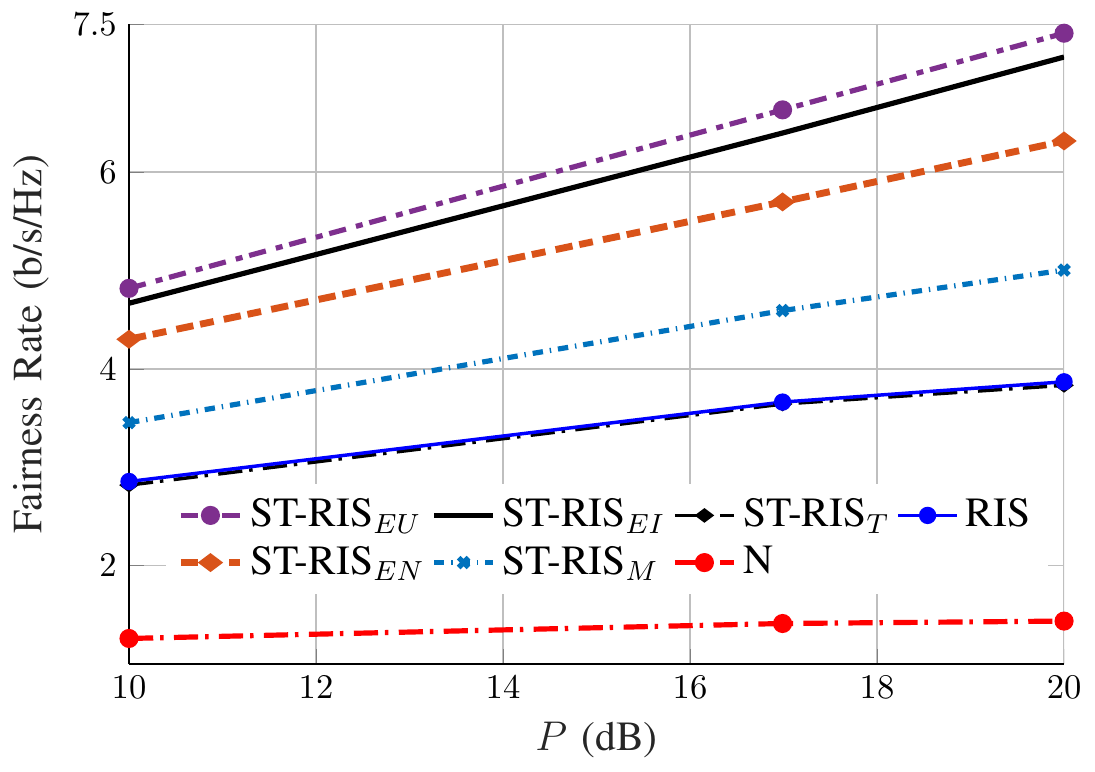}
    \caption{The average fairness rate versus $P$ for $N_{BS}=6$, $N_{RIS}=60$, $L=1$, $K=6$, $M=1$, $n_t=200$, and $\epsilon^c=0.001$.}
	\label{Fig-star} 
\end{figure}
Now we compare the performance of a regular RIS with a STAR-RIS. To this end, we consider a single-cell BC with $K$ users. We assume that the regular RIS can assist only a half of users. In other words, only $K/2$ users are covered by the regular RIS, and the other $K/2$ users do not receive a signal from the regular RIS. 
However, the STAR-RIS can assist all users since it provides a $360^\circ$ coverage. We assume that $K/2$ users are in the reflection space of the STAR-RIS, while the other $K/2$ users are in the transmission space of the STAR-RIS. 
We assume that both the regular and STAR-RIS have the same number of components, i.e., $N_{RIS}$. 
We consider three different strategies for the STAR-RIS. First, half of the RIS components operate only in the reflection mode, and the other half operate only in the transmission mode. We refer to this scheme as the mode switching, similar to \cite{mu2021simultaneously}. 
Second, we assume that all RIS components operate simultaneously in the transmission and reflection modes, which is referred to as the energy splitting mode. Third, we divide each time slot into two sub-slots and assume that all RIS components operate in reflection mode in the first sub-slot, while they all operate in transmission mode in the next sub-slot, which is referred to as the time switching mode.

Fig. \ref{Fig-star} shows the average fairness rate versus      $P$ for $N_{BS}=6$, $N_{RIS}=60$, $L=1$, $K=6$, $M=1$, $n_t=200$, and $\epsilon^c=0.001$. 
As can be observed,  the regular RIS can highly improve the system performance even though it assists only a half of users. The average fairness rate for systems without RIS slightly increases with power budget; however, the fairness rate of RIS-assisted systems (either STAR or regular) almost linearly increases with the power budget. The reason is that the system is interference-limited, and the power budget increment is not necessarily equivalent to SINR improvement. Since the system is not highly overloaded, RIS can manage part of interference, which significantly improves the effective SINR. Additionally, the users are in the cell edge, which implies that they have a relatively weak direct link. Thus, RIS can considerably improve the channel gain, which in turn provide a significant gain, especially when the power budget is high. 

We also observe that the STAR-RIS can outperform the regular RIS with both the MS and ES schemes. However, the STAR-RIS with TS cannot provide any benefit over regular RIS in this particular example since the TS mode covers only reflection or transmission spaces at each time sub-slot, which implies that the TS mode can  assists only a half of users in each sub-slot, similar to the regular RIS. Indeed, the TS mode cannot provide a $360^\circ$ coverage simultaneously, which restricts its applicability in this particular scenario. Moreover, we observe that the STAR-RIS with the ES scheme and different feasibility sets outperforms the MS scheme with the feasibility set $\mathcal{T}_{SI}$. The reason is that the MS scheme can be seen as a lower bound for the performance of STAR-RIS with the ES scheme since the ES scheme encompasses the MS scheme. 
Finally, we observe that the ES scheme with the feasibility set $\mathcal{T}_{SI}$ performs very close to the ES scheme with the feasibility set $\mathcal{T}_{SU}$, which can be considered as an upper bound for the system performance.

\subsubsection{Impact of the feasibility sets}
\begin{figure}[t!]
    \centering
\includegraphics[width=.45\textwidth]{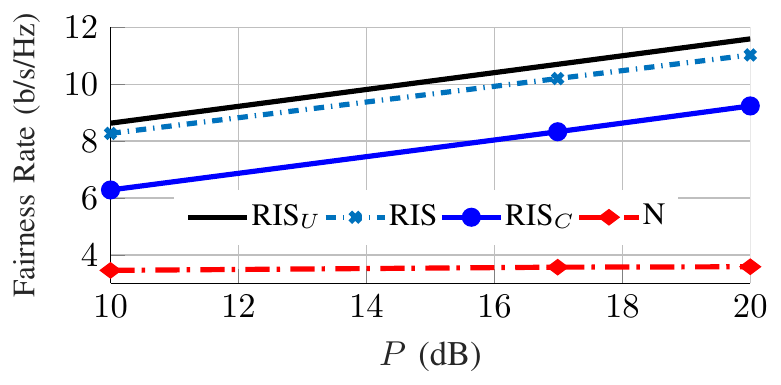}
    \caption{The average fairness rate versus $P$ for $N_{BS}=8$, $N_{RIS}=40$, $L=2$, $K=2$, $M=2$, $n_t=256$, and $\epsilon^c=10^{-5}$.}
	\label{Fig-fs} 
\end{figure}
Fig. \ref{Fig-fs} shows the average fairness rate versus $P$ for $N_{BS}=8$, $N_{RIS}=40$, $L=2$, $K=2$, $M=2$, $n_t=256$, and $\epsilon^c=10^{-5}$. As expected, the feasibility set $\mathcal{T}_U$ outperforms the other feasibility sets. However, the feasibility set $\mathcal{T}_I$ performs very close to the upper bound performance of a passive RIS, which is given by $\mathcal{T}_U$. 
Note that our proposed scheme for the feasibility set $\mathcal{T}_U$ converges to a stationary point of the original problem. Thus, the gap between the upper bound with $\mathcal{T}_U$ and our proposed scheme with $\mathcal{T}_I$ can be seen as an upper bound for the mismatch between our proposed algorithm and a scheme, which attains a stationary point of the original problem. 

\subsubsection{Convergence behavior}
\begin{figure}[t!]
    \centering
\includegraphics[width=.45\textwidth]{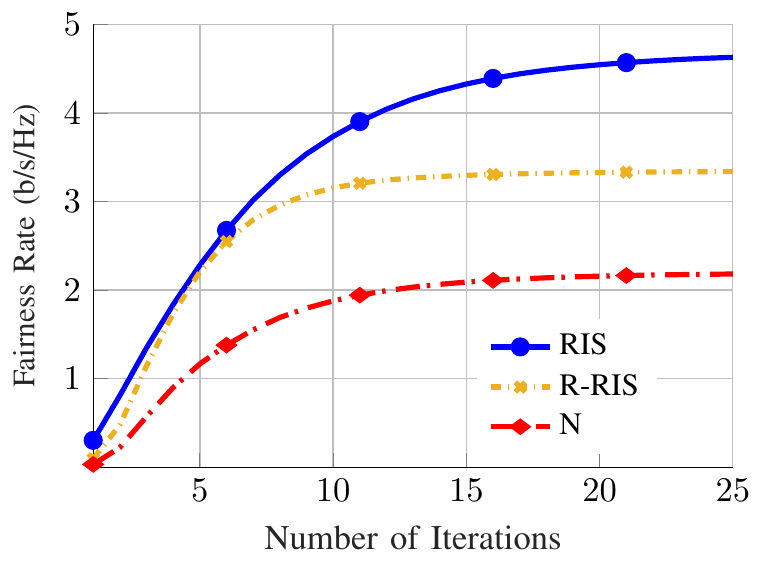}
    \caption{The average fairness rate versus the number of iterations for $P=20$dB, $N_{BS}=8$, $N_{RIS}=40$, $L=2$, $K=2$, $M=2$, $n_t=256$, and $\epsilon^c=10^{-5}$.}
	\label{Fig-fs-rit} 
\end{figure} 
Fig. \ref{Fig-fs-rit} shows the average fairness rate versus  the number of iterations for $P=20$dB, $N_{BS}=8$, $N_{RIS}=40$, $L=2$, $K=2$, $M=2$, $n_t=256$, and $\epsilon^c=10^{-5}$.
As can be observed, the proposed scheme for RIS-assisted systems outperforms the final solution  of the other schemes after a few (less than 10 in this example) iterations. 
Moreover, the schemes converge in about $25$ iterations. 
This figure shows a trade-off between optimality and complexity. Indeed, if there is a very strict latency constraint, the algorithms can be stopped before their convergence when a desired performance has been achieved. 

\subsection{Weighted-sum rate maximization}
\begin{figure}[t!]
    \centering
\includegraphics[width=.4\textwidth]{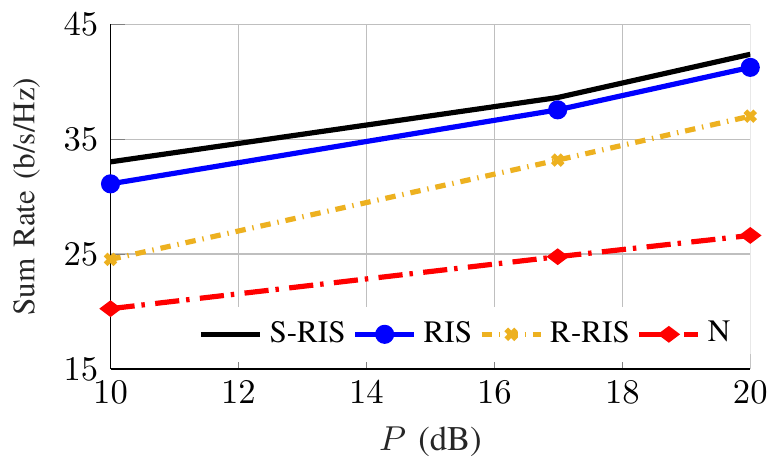}
    \caption{The average sum rate versus $P$ for $N_{BS}=10$, $N_{RIS}=20$, $L=2$, $K=2$, $M=2$, $n_t=200$, and $\epsilon^c=0.001$.}
	\label{Fig-sr} 
\end{figure}
In the previous subsections, we consider the impact of different parameters in the system performance. It can be expected the we observe a similar behavior if we change the objective function. Thus, in this subsection, we provide only one numerical example.  
Fig. \ref{Fig-sr} shows the average sum rate versus $P$ for $N_{BS}=10$, $N_{RIS}=20$, $L=2$, $K=2$, $M=2$, $n_t=200$, and $\epsilon^c=0.001$.
As can be observed, RIS can significantly increase the sum rate of the system. Even an RIS with random components can highly improve the system performance. 
We also observe that there is a small gap between the rate with FBL and the Shannon rate, and the gap decreases with the power budget. 
The reason is that, the mismatch between the rate in \eqref{rate} and the Shannon rate decreases with SINR, and the increment in power budget may result in SINR enhancement since the system is not overloaded.     
Note that the gap is expected to increase if we employ a shorter packet length or operate in a lower decoding error probability. 

\subsection{Global energy efficiency maximization}
\begin{figure}[t!]
    \centering
\includegraphics[width=.4\textwidth]{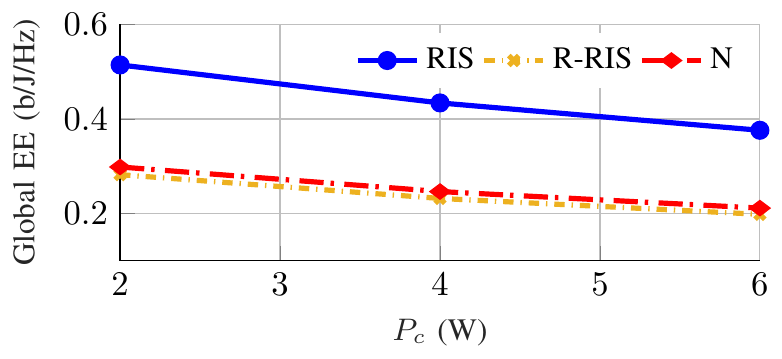}
    \caption{The average GEE versus $p_c$ for $N_{BS}=4$, $N_{RIS}=20$, $L=2$, $K=2$, $M=2$, $n_t=200$, and $\epsilon^c=0.001$.}
	\label{Fig-gee} 
\end{figure}
Fig. \ref{Fig-gee} shows the average GEE versus $p_c$ for $N_{BS}=4$, $N_{RIS}=20$, $L=2$, $K=2$, $M=2$, $n_t=200$, and $\epsilon^c=0.001$. 
In this figure, we assume that the power consumption of each RIS is $1$ W. Thus, the effective $p_c$ for users in systems without RIS is actually $p_c-1/K$ W. 
Note that $p_c$ is the constant power consumption by the devices and is different from the transmission power.   
We can observe through Fig. \ref{Fig-ee} that RIS can highly increase the average GEE of the system if the RIS components are properly optimized.  
However, if the RIS components are randomly chosen, it may happen that RIS may worsen the system performance in this particular example, which is in line with the results in Fig. \ref{Fig-rr}a.
\subsection{ Minimum-weighted energy efficiency maximization}
\begin{figure}[t!]
    \centering
\includegraphics[width=.4\textwidth]{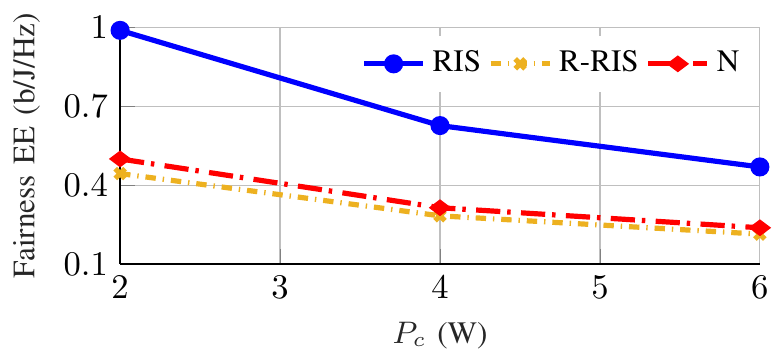}
    \caption{The average fairness EE versus $p_c$ for $N_{BS}=4$, $N_{RIS}=20$, $L=2$, $K=2$, $M=2$, $n_t=200$, and $\epsilon^c=0.001$.}
	\label{Fig-ee} 
\end{figure}
Fig. \ref{Fig-ee} shows the average fairness EE versus $p_c$ for $N_{BS}=4$, $N_{RIS}=20$, $L=2$, $K=2$, $M=2$, $n_t=200$, and $\epsilon^c=0.001$. Similar to Fig. \ref{Fig-gee}, we consider the power consumption of each RIS as $1$ W. As can be observed, RIS can significantly improve the EE of the system if the reflecting coefficients are properly optimized. 
Note that RIS with random reflecting coefficients may not provide any benefits in this particular example, which shows the importance of optimizing the reflecting coefficients.   
We also observe this show in Fig. \ref{Fig-rr}a and Fig. \ref{Fig-gee}.

\begin{figure}
    \centering
\begin{subfigure}{0.24\textwidth}
        \centering
\includegraphics[width=.9\textwidth]{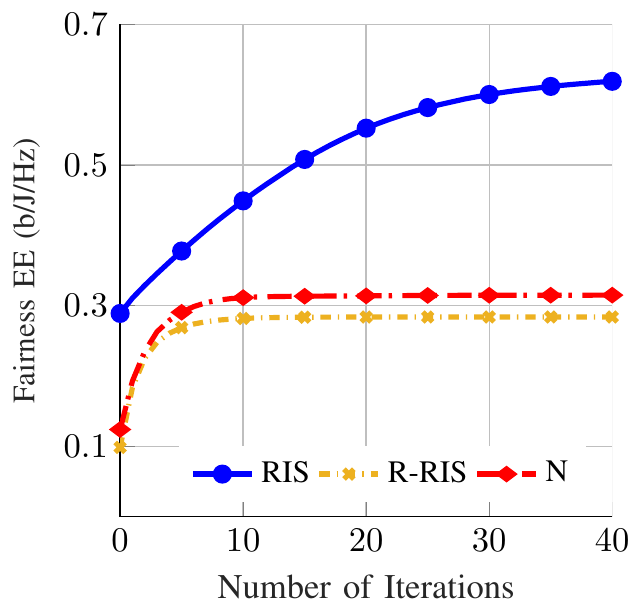}
        \caption{$p_{c}=4W$.}
    \end{subfigure}%
    ~
    \begin{subfigure}{0.24\textwidth}
        \centering
\includegraphics[width=.9\textwidth]{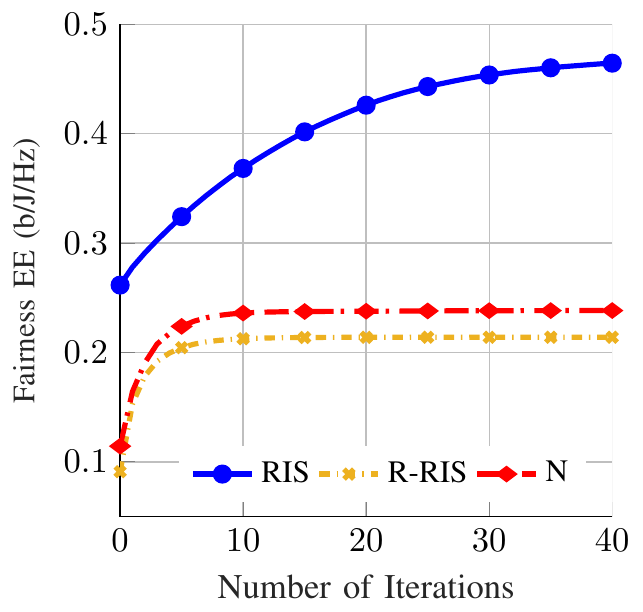}
        \caption{$p_{c}=6W$.}
    \end{subfigure}
    \caption{The average fairness EE versus number of iterations for $N_{BS}=4$, $N_{RIS}=20$, $L=2$, $K=2$, $M=2$, $n_t=200$, and $\epsilon^c=0.001$.}
	\label{Fig-ee2}  
\end{figure}
Fig. \ref{Fig-ee2} shows the average fairness EE versus the number of iterations for $N_{BS}=4$, $N_{RIS}=20$, $L=2$, $K=2$, $M=2$, $n_t=200$, and $\epsilon^c=0.001$. As can be observed, RIS with random components performs worse than the algorithm for systems without RIS. However, when RIS components are properly design, RIS can highly improve the system performance. Additionally, the algorithm for systems without RIS converges with only a few iterations, while the algorithm for RIS-assisted systems require more iterations to converge.  
Note that the initial point of the algorithms is given by the solution of the maximization of the minimum rate to ensure that the initial point is feasible. 

\begin{figure}[t!]
    \centering
\begin{subfigure}{0.24\textwidth}
        \centering
\includegraphics[width=\textwidth]{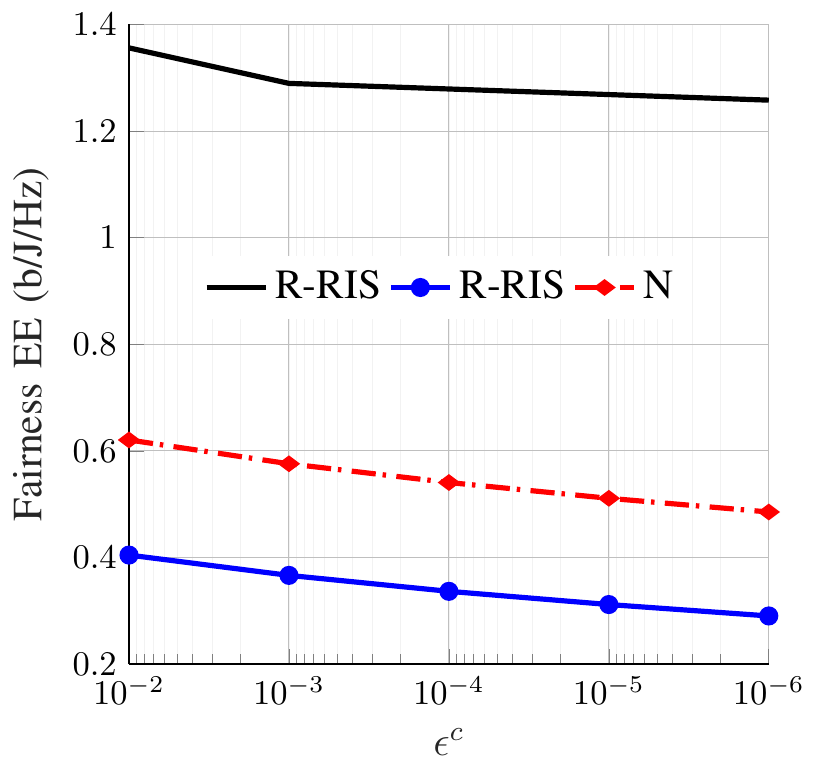}
        \caption{Fairness EE versus $\epsilon^c$ for $n_t=200$ bits and $t_c=0.1$ ms.}
    \end{subfigure}%
    ~
    \begin{subfigure}{0.24\textwidth}
        \centering
\includegraphics[width=\textwidth]{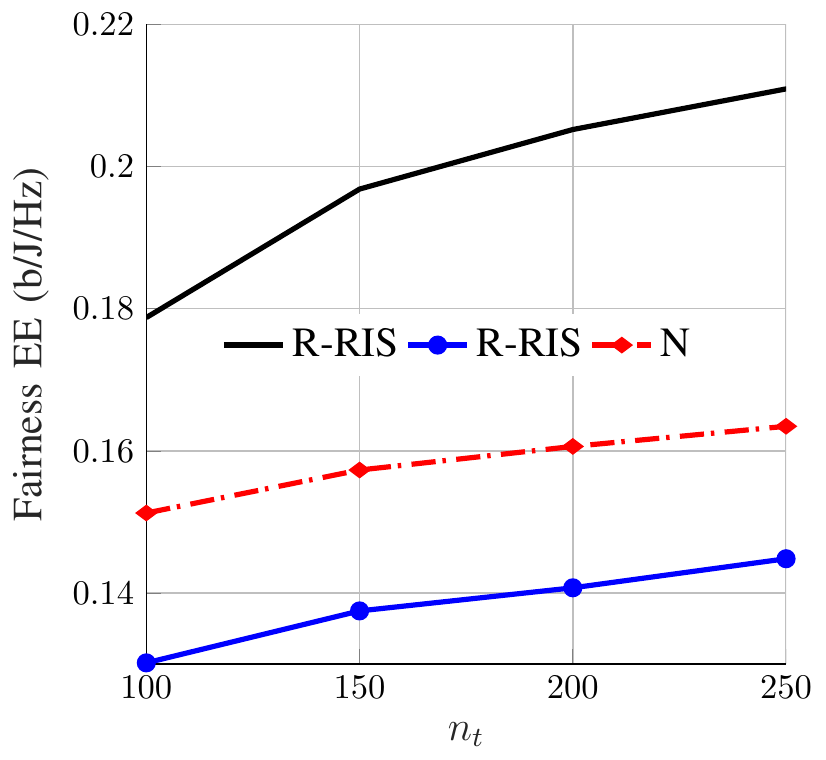}
        \caption{Fairness EE versus $n_t$ for $\epsilon^c=10^{-5}$ and $t_c=5$ ms.}
    \end{subfigure}
    \caption{The average fairness EE versus $\epsilon^c$ and $n_t$ for $N_{BS}=4$, $N_{RIS}=20$, $L=2$, $K=2$, $M=2$. 
}
	\label{Fig-ee3}  
\end{figure}
Fig. \ref{Fig-ee3} shows the average fairness EE versus $\epsilon^c$ and $n_t$ for $N_{BS}=4$, $N_{RIS}=20$, $L=2$, $K=2$, $M=2$. As can be observed, RIS can provide a significant gain for a wide range of variables if the RIS components are properly optimized. As we also observe in Section \ref{sec5a} with the average fairness rate as the performance metric, the system performance is improved by increasing $n_t$ and $\epsilon^c$. Indeed,  the EE decreases if we employ a shorter packet length or if we want to operate with a more reliable link. 
\subsection{Summary}
We show that RIS can significantly improve the spectral and EE of a multi-cell RIS-assisted BC with FBL  and different parameters. 
Moreover, we show that the FBL rate is very close to the Shannon rate for underloaded systems when packet lengths are higher than $200$ bits and $\epsilon$ is higher than $10^{-3}$. This is also in line with the results in \cite[Sec. IV.C]{polyanskiy2010channel}. Note that in underloaded systems, the number of BS antennas is higher than the number of users. 
Thus, the effective SINR of users is expected to be higher in underloaded systems, comparing to overloaded systems, which decreases the gap between the FBL rate and the Shannon rate \cite{polyanskiy2010channel}. 
In particular, our main findings can be summarized as follows:
\begin{itemize} 
\item For a fixed $N_{RIS}$, the benefits of RIS is decreasing in $K$ since the number of RIS components per user decreases with $K$. To compensate for that, we have to increase the number RIS components.

\item The gap between the Shannon rate and the rate by FBL decreases with SINR. It means that the gap increases with $K$ and decreases with $N_{BS}$ and $P$. 

\item RIS can significantly increase the data rate and EE of the system. Equivalently, for a target data rate, RIS can highly improve the reliability of the system by decreasing the error probability. 

\end{itemize}

\section{Conclusion and future work}\label{con-sec}
In this paper, we proposed a general optimization framework to solve a variety of optimization problems in URLLC RIS-assisted networks with FBL. This framework can solve any optimization problem in which the objective and/or constraints are linear functions of the rates/EE of users. Additionally, the framework can be applied to any interference-limited system with TIN. We considered a multi-cell STAR-RIS-assisted BC, as an illustrative example, and specialized the framework to solve the minimum-weighted-rate, weighted-sum-rate, global-EE, and minimum-weighted-EE maximization problems. 
We made realistic assumptions regarding RIS (either regular or STAR-RIS). 
We showed that RIS can considerably improve the spectral and EE of a multi-cell MISO RIS-assisted BC with FBL if RIS components are properly optimized. 
Furthermore, we showed that we may operate close to the Shannon rate with a relatively short packet if the decoding error probability is not very low. Finally, we showed that a STAR-RIS can outperform a regular RIS if the regular RIS cannot cover all the users. 

As future work, it can be interesting to investigate the performance of RIS in URLLC systems with small signaling overheads. Moreover, the global optimal solution of the considered problems is not known, which can be another challenging line for future studies. 
It is also interesting to compare the performance of our techniques with machine-learning-based and/or other data-driven techniques.
\section{Acknowledgment}
The authors would like to thank Dr. Tobias Koch for his valuable comments and discussions during the preparation of this work.

\appendices

\section{Initial points for the optimization framework}\label{simp-sec}
In this appendix, we propose a scheme to heuristically obtain suitable initial points for the proposed optimization framework. 
As indicated in Section \ref{sec=re}, the rate $r_{l,k}$ is negative for $0\leq\gamma_{l,k}\leq\gamma^{(0)}$, which means that $\gamma_{l,k}\leq\gamma^{(0)}$ is not a feasible point. Additionally,  $r_{l,k}$ is strictly increasing in $\gamma_{l,k}$ for all feasible practical points, i.e.,  $\gamma_{l,k}\geq\gamma^{(0)}$. 
To avoid infeasible initial point and get a better performance, we set the solution of the maximization of the minimum SINR as an initial point for our framework. 
The maximization of the minimum SINR can be formulated as
\begin{align}\label{opt-sinr}
\max_{\{\mathbf{x}\}\in\mathcal{X},\{\bm{\Theta}\}\in\mathcal{T},\gamma}&\gamma&
\text{s.t.}\,\,&\gamma_{lk}\geq \gamma&\forall l,k.
\end{align}
Note that for the systems without RIS, we need to optimize only  over $\{\mathbf{x}\}$. However, in this appendix, we consider the most general case. The optimization problem \eqref{opt-sinr} is not convex, but we can obtain a suboptimal solution of \eqref{opt-sinr} by MM, AO and the GDA since it falls into fractional programming and each fraction has a quadratic numerator and denominator \cite{zappone2015energy, soleymani2019improper}.  
That is, we first optimize over the beamforming vectors $\{\mathbf{x}\}$ and obtain $\{\mathbf{x}^{(t)}\}$ while the reflecting coefficients $\{\bm{\Theta}\}$ are fixed to $\{\bm{\Theta}^{(t-1)}\}$. 
Then we alternate and optimize over $\{\bm{\Theta}\}$ while $\{\mathbf{x}\}$ is fixed to $\{\mathbf{x}^{(t)}\}$.
Due to a space restriction, we do not provide the optimization over $\{\mathbf{x}\}$ since the problem has a structure similar to optimizing over $\{\bm{\Theta}\}$. 
 For a given $\{\mathbf{x}^{(t)}\}$, the problem \eqref{opt-sinr} can be written as 
\begin{align}\label{opt-sinr-2}
\max_{\{\bm{\Theta}\}\in\mathcal{T},\gamma}&\gamma&
\text{s.t.}\,\,&\frac{|\mathbf{h}_{lk,l}
\mathbf{x}_{lk}|^2}
{\sigma^2
+
\sum_{\forall [ij]\neq [lk]}|\mathbf{h}_{lk,i}
\mathbf{x}_{ij}|^2}\geq \gamma&\forall l,k,
\end{align}
which is a multiple-ratio FP problem in which both the numerator and denominator are quadratic functions of the channels. 
To solve \eqref{opt-sinr-2}, 
we first employ CCP to approximate  the numerator of $\gamma_{lk}$ for all $l,k$ with a linear lower bound since it is a convex function. 
That is
 \begin{multline}\label{eq-50}
|\mathbf{h}_{lk,l}
\mathbf{x}_{lk}|^2
\geq\hat{s}_{lk}\left(\mathbf{h}_{lk,i}\right)\triangleq
|\mathbf{h}_{lk,l}^{(t-1)}
\mathbf{x}_{lk}^{(t)}|^2
\\
+2\mathfrak{R}\left\{
\mathbf{h}_{lk,l}^{(t-1)}
\mathbf{x}_{lk}^{(t)}
\left(\mathbf{h}_{lk,l}
\mathbf{x}_{lk}^{(t)}-
\mathbf{h}_{lk,l}^{(t-1)}
\mathbf{x}_{lk}^{(t)}\right)
\right\}.
\end{multline}
Substituting \eqref{eq-50} in \eqref{opt-sinr-2}, we  have the following surrogate optimization problem
\begin{align}\label{opt-sinr-2-c}
\max_{\{\bm{\Theta}\}\in\mathcal{T},\gamma}&\gamma&
\text{s.t.}\,\,&\frac{\hat{s}_{lk}\left(\mathbf{h}_{lk,i}\right)
}
{\sigma^2
+
\sum_{\forall [ij]\neq [lk]}|\mathbf{h}_{lk,i}
\mathbf{x}_{ij}|^2}\geq \gamma&\forall l,k,
 \end{align}
which is non-convex, but can be solved by GDA. 
That is, we iteratively solve 
\begin{subequations}\label{opt-sinr-3-c}
\begin{align}
\max_{\{\bm{\Theta}\}\in\mathcal{T},\gamma}\!&\gamma\!\\
\text{s.t.}\,\,&
\hat{s}_{lk}\left(\cdot\right)\!
-\mu^{(t,m)}
\!
\left(
\!\!\sigma^2
\!\!+\!\!\!\!\!\!
\sum_{\forall [ij]\neq [lk]}
\!\!\!\!
|\mathbf{h}_{lk,i}
\mathbf{x}_{ij}|^2
\!\!
\right)\!
\geq \gamma,\forall l,k,
 \end{align}
\end{subequations}
and update $\mu^{(t,m)}$ as
\begin{equation*}
\mu^{(t,m)}=\min_{l,k}\left(\frac{\hat{s}_{lk}\left(\mathbf{h}_{lk,i}^{(t,m)}\right)
}
{\sigma^2
+
\sum_{\forall [ij]\neq [lk]}|\mathbf{h}_{lk,i}^{(t,m)}
\mathbf{x}_{ij}|^2}\right),
\end{equation*}
where $\mathbf{h}_{lk,i}^{(t,m)}$ is the initial point at the $m$-th iteration of \eqref{opt-sinr-3-c}, which is the solution of the previous step. 
The optimization problem \eqref{opt-sinr-3-c} is convex when $\mathcal{T}$ is a convex set, i.e., when we consider $\mathcal{T}_U$ for regular RIS and $\mathcal{T}_{SU}$ for STAR-RIS. 
We can convexify $\mathcal{T}_I$, $\mathcal{T}_C$, $\mathcal{T}_{SI}$, and $\mathcal{T}_{SN}$ similar to Section \ref{sec-opt-ref}. 
Since it is straightforward to do so, we do not repeat them here.

\section{Useful inequalities}

In this appendix, we provide some inequalities, which are widely used in this work. 
Consider real and positive variables 
$x$, $\bar{x}$, $y$ and $\bar{y}$. Then, the following inequality holds for all $x,y,\bar{x},\bar{y}$ \cite[Eq. (75)]{nasir2020resource}
\begin{equation}
\sqrt{xy}\leq \frac{\sqrt{\bar{x}}}{2\sqrt{\bar{y}}}y+\frac{\sqrt{\bar{y}}}{2\sqrt{\bar{x}}}x,
\end{equation}
For the case that $y=\bar{y}=1$, we have
\begin{equation}\label{eq-11}
\sqrt{x}\leq \frac{\sqrt{\bar{x}}}{2}+\frac{x}{2\sqrt{\bar{x}}}.
\end{equation}
Additionally, the following inequality holds for all feasible $x,y,\bar{x},\bar{y}$ \cite[Eq. (76)]{nasir2020resource}
\begin{equation}\label{eq-12}
\frac{x^2}{y}\geq \frac{\bar{x}}{\bar{y}}\left(2x-\frac{\bar{x}}{\bar{y}}y\right).
\end{equation}
When $x$ and $\bar{x}$ are complex, \eqref{eq-12} is modified to 
\begin{equation}\label{eq-12-mo}
\frac{|x| ^2}{y}\geq \frac{2\mathfrak{R}\{\bar{x}^*x\}}{\bar{y}}-\frac{|\bar{x}|^2}{\bar{y}^2}y.
\end{equation}

Now, consider positive real-valued variables $y$ and $\bar{y}$, and complex-valued variables $x$ and $\bar{x}$. Then, the following inequality holds for all feasible   $y,\bar{y},x,\bar{x}$ \cite[Lemma 2]{soleymani2022improper}
\begin{multline}
\label{eq-13}
\ln\left(1+\frac{|x|^2}{y}\right)\geq \ln\left(1+\frac{|\bar{x}|^2}{\bar{y}}\right)
-\frac{|\bar{x}|^2}{\bar{y}}
\\
+\frac{2\mathfrak{R}\{\bar{x}^*x\}}{\bar{y}}
-\frac{|\bar{x}|^2}{\bar{y}}\frac{|x|^2+y}{|\bar{x}|^2+\bar{y}}.
\end{multline}

\section{Proof of Lemma 3}
\label{app=b}
The rate of users consists of two parts: the Shannon rate and the term related to the channel dispersion. We first obtain a concave lower-bound for the Shannon rate. To this end, we employ the inequality in \eqref{eq-13}, which gives us
\begin{align}
\nonumber
{r}_{s,lk}\geq\hat{r}_{s,lk}^{(t-1)}&=\log\left(1+\gamma_{lk}^{(t-1)}\right)
-
\gamma_{lk}^{(t-1)}
\\
\nonumber
&
\hspace{.3cm}
+
\frac{2\mathfrak{R}
\left\{
\left(
\mathbf{h}_{lk,l}
\mathbf{x}_{lk}^{(t-1)}
\right)^*
\mathbf{h}_{lk,l}
\mathbf{x}_{lk}
\right\}
}{
\sigma^2
+
\sum_{ij}|\mathbf{h}_{lk,i}
\mathbf{x}_{ij}^{(t-1)}|^2-|\mathbf{h}_{lk,l}
\mathbf{x}_{lk}^{(t-1)}|^2
}
\\
&
\hspace{.3cm}
-
\gamma_{lk}^{(t)}
\frac{
\sigma^2
+
\sum_{ij}|\mathbf{h}_{lk,i}
\mathbf{x}_{ij}|^2
}{
\sigma^2
+
\sum_{ij}|\mathbf{h}_{lk,i}
\mathbf{x}_{ij}^{(t-1)}|^2
}.
\label{r-hat}
\end{align}
Now, we obtain a concave lower-bound for $-\delta_{lk}$, which is equivalent to finding a convex upper bound for $\sqrt{V_{lk}}$. Applying the inequality \eqref{eq-11}, we have 
\begin{align*}
\sqrt{V_{lk}}&\leq 
\frac{\sqrt{V_{lk}^{(t)}}}{2}
+
\frac{\gamma_{lk}}{\sqrt{V_{lk}^{(t)}}\left(1+\gamma_{lk}\right)}.
\end{align*}
Replacing $\gamma_{lk}$ by \eqref{sinr}, we have
\begin{multline}\label{eq19}
\delta_{lk}=\frac{Q^{-1}(\epsilon^c)}{\sqrt{n_t}}\sqrt{V_{lk}}\leq \tilde{\delta}_{lk}=
\frac{Q^{-1}(\epsilon^c)\sqrt{V_{lk}^{(t)}}}{2\sqrt{n_t}}
\\
+\frac{Q^{-1}(\epsilon^c)}{\sqrt{n_tV_{lk}^{(t)}}}
\left(1-
\underbrace{
\frac{
\sigma^2+\sum_{[ij]\neq [lk]}|\mathbf{h}_{lk,i}
\mathbf{x}_{ij}|^2
}{\sigma^2+\sum_{ij}|\mathbf{h}_{lk,i}
\mathbf{x}_{ij}|^2}
}_{\zeta_{lk}}
\right).
\end{multline}
Unfortunately, $\tilde{\delta}_{lk}$ is not a convex function since $\zeta_{lk}$ is not concave. 
However, we can apply \eqref{eq-12-mo} to obtain a concave lower bound for $\zeta_{lk}$ (or equivalently a convex upper bound for $\tilde{\delta}_{lk}$) as
\begin{multline}\label{eq20}
\zeta_{lk}\geq
2\frac{
\sigma^2+\sum_{[ij]\neq [lk]}\mathfrak{R}\left\{\left(\mathbf{h}_{lk,i}
\mathbf{x}_{ij}^{(t-1)}\right)^*
\mathbf{h}_{lk,i}
\mathbf{x}_{ij}
\right\}
}{\sigma^2+\sum_{ij}|\mathbf{h}_{lk,i}
\mathbf{x}_{ij}^{(t-1)}|^2}
\\
-
\zeta_{lk}^{(t-1)}\frac{
\sigma^2+\sum_{ij}|\mathbf{h}_{lk,i}
\mathbf{x}_{ij}|^2
}{\sigma^2+\sum_{ij}|\mathbf{h}_{lk,i}
\mathbf{x}_{ij}^{(t-1)}|^2}.
\end{multline}
Substituting \eqref{eq20} into \eqref{eq19}, we have 
\begin{multline}\label{eq21}
\delta_{lk}\leq \tilde{\delta}_{lk}\leq \hat{\delta}_{lk}=
\frac{Q^{-1}(\epsilon^c)\sqrt{V_{lk}^{(t)}}}{2\sqrt{n_t}}
+\frac{Q^{-1}(\epsilon^c)}{\sqrt{n_tV_{lk}^{(t)}}}
\\
-\frac{2Q^{-1}(\epsilon^c)}{\sqrt{n_tV_{lk}^{(t)}}}
\frac{
\sigma^2+\sum_{[ij]\neq [lk]}\mathfrak{R}\left\{\left(\mathbf{h}_{lk,i}
\mathbf{x}_{ij}^{(t-1)}\right)^*
\mathbf{h}_{lk,i}
\mathbf{x}_{ij}
\right\}
}{\sigma^2+\sum_{ij}|\mathbf{h}_{lk,i}
\mathbf{x}_{ij}^{(t-1)}|^2}
\\
+
\frac{Q^{-1}(\epsilon^c)\zeta_{lk}^{(t-1)}}{\sqrt{n_tV_{lk}^{(t)}}}
\frac{
\sigma^2+\sum_{ij}|\mathbf{h}_{lk,i}
\mathbf{x}_{ij}|^2
}{\sigma^2+\sum_{ij}|\mathbf{h}_{lk,i}
\mathbf{x}_{ij}^{(t-1)}|^2}
.
\end{multline}
Finally, the concave lower bound for $r_{lk}$ is
\begin{equation}
r_{lk}\geq \tilde{r}_{lk}^{(t-1)}=
\hat{r}_{s,lk}^{(t-1)}
-
\hat{\delta}_{lk}^{(t-1)}.
\end{equation}
If we substitute the values for $\hat{r}_{s,lk}^{(t-1)}$ and $\hat{\delta}_{lk}^{(t-1)}$, and simplify the equations, we can easily obtain \eqref{eq=lem=q}.

\bibliographystyle{IEEEtran}
\bibliography{ref2}
\end{document}